 \journalname{}%Comm. Math. Phys.Probl. Inf. Transm.  Israel J. Math. Appl. Algebra Eng. Commun. Comput.
\begin{document}
\title{ Skew Generalized Polycyclic Codes with Derivations}

\titlerunning{Skew Generalized Polycyclic Codes with Derivations}        % if too long for running head

\author{Shikha Patel$^{1}$ \and Om Prakash$^{* 1}$  %etc.
}
\authorrunning{Patel and Prakash }%\and  Shikha Patel \and Habibul Islam

\institute{\at
              $^{1}$Department of Mathematics,
              Indian Institute of Technology Patna, Patna 801 106, India \\
              ORCID: 0000-0002-6512-4229 (Om Prakash)\\
              \email{shikha\_1821ma05@iitp.ac.in, om@iitp.ac.in(*corresponding author)}
                          }
\date{Received: date / Accepted: date}
% The correct dates will be entered by the editor
\maketitle	

\begin{abstract}
	In this paper, we first consider the iterated skew polynomial ring $\mathscr{R}[z_1;\tau_1,\delta_{\tau_1}]$\\$[z_2;\tau_2,\delta_{\tau_2}]$, where $\mathscr{R}$ is a finite ring with unity. Then we use this structure for the construction of skew generalized polycyclic codes over the ring  $\mathscr{R}$ and finite field $\mathbb{F}_q$, where $q=p^m$ for some positive integer $m$. Further, we derive the structure of the generator and parity check matrices for skew generalized polycyclic codes. Furthermore, we improve the Bose-Chaudhuri-Hocquenghem (BCH) lower bound for a minimum distance of skew generalized polycyclic codes with non-zero derivations over a finite field. Moreover, we find a sufficient condition for a code to be a maximum-distance-separable (MDS) code. In addition, we provide examples of MDS codes to show the importance of our results. A comparative summary of our work with other linear codes is also discussed.
\end{abstract}

\section{Introduction}
Coding theory mainly studies the methods that are efficient and preserve the accuracy of information during transmission from one place to another. In coding theory, linear codes have been studied for the last seven decades. Initially, linear codes were studied over the binary field. In $1970$, Blake \cite{Blake} initiated the study of codes over finite rings. Cyclic codes are the most important class of linear codes from an implementation point of view and also play a prominent role in the development of algebraic coding theory. These codes have been studied over several finite rings and produce many new and optimal codes, refer \cite{Blake1,Blake,Sagar,Yildiz,Zheng}. %On the other hand, in $1972$, Hartmann and Tzeng \cite{Hartmann} studied the BCH lower bounds for the minimum distance of cyclic codes.

In $2007$, Boucher et al. \cite{Boucher1} introduced skew cyclic codes which are a generalization of cyclic codes over a non-commutative ring, namely skew polynomial ring $\mathbb{F}_q[x;\theta]$, where $\mathbb{F}_q$ is a finite field, and $\theta$ is an automorphism of $\mathbb{F}_q$. Besides the commutative case, skew polynomial rings also have many applications in constructing algebraic codes with better parameters. It is well-known that the factorization of a polynomial plays a key role in studying these codes. More factorization leads to a rich algebraic structure that is capable of producing codes with good parameters. Therefore, it is logical to extend these codes over a structure in which polynomials have more factorization than the commutative structure. One smart choice is the skew polynomial ring. We always have
more choices to construct new codes from the codes over the skew polynomial ring than a commutative ring. That is one of the major motivations to study codes over a skew polynomial ring.

In $2009$, Boucher and Ulmer \cite{Boucher2} obtained some skew codes with Hamming distance more significant than the known linear codes with the same code rate parameters. %Again, in $2014$, they \cite{Boucher} revisited linear codes again by using the concept of skew polynomial ring with derivations.
This work was further advanced by Boulagouaz and Leroy in 2013 with the development of $(\sigma,\delta)$-codes \cite{MH}, and by Boucher and Ulmer \cite{Boucher} in 2014, who explored linear codes through the framework of skew polynomial rings with non-trivial automorphisms and inner derivations over finite fields. Later on, in $2017$, Cuiti$\tilde{n}$o and Tironi \cite{Luis} studied the structural properties of skew generalized cyclic codes over finite fields and also obtained some BCH type lower bounds for their minimum distances which generalizes the results of Hartmann and Tzeng \cite{Hartmann} in the noncommutative case. Moreover, cyclic, constacyclic, and negacyclic codes are part of a larger class of codes called polycyclic codes, i.e., codes that can
be viewed as ideals of a factor ring $\mathbb{F}_q[x]/(f)$, where $f$ is some polynomial of degree $n$. They reduce to cyclic
codes when $f = x ^n -1$ and to constacyclic codes when $f = x ^n -a$, for some $a \in \mathbb{F}_q^*$. Apart from this, skew polycyclic codes are a powerful generalization of all such codes. In this paper, we generalize these codes to two-dimensional skew polycyclic codes.

The two-dimensional cyclic code was introduced by Ikai et al. \cite{Ikai} in $1975$ and Imai \cite{Imai} in $1977$.  The two-dimensional theory is beneficial for the analysis and generation of two-dimensional periodic arrays. It gives a construction method for the two-dimensional feedback shift register with a minimum number of storage devices that generate a given two-dimensional periodic array. Simple 2-D linear feedback shift registers and combinatorial logic in digital VLSI circuits can be used to implement TDC codes. In \cite{Garani,Yoon}, 2-D codes for particular error patterns were examined. In $2014$, L. Xiuli et al. \cite{Li} generalized the concept as a two-dimensional skew cyclic code over a finite field. In $2019$, Sharma and Bhaintwal \cite{A} studied the structural behavior of two-dimensional skew cyclic code over the ring $\mathbb{F}_q+u\mathbb{F}_q$.  In \cite{Z}, Sepasdar and Khashyarmanesh studied the algebraic structure of two-dimensional cyclic codes of length $n=s2^k$ over the finite field. This technique was later expanded to $(\lambda_1,\lambda_2)$-constacyclic codes by Prakash and Patel \cite{Prakash}. The authors examined two-dimensional double cyclic codes over finite fields in \cite{Haji}. Other scholars have expressed interest in studying these codes (see \cite{Moro,Patel,Rajabi,Tharkal}). These works motivate us to study the structural properties of the two-dimensional skew polycyclic codes. Last but not least, BCH-type lower bounds for a minimum distance of skew generalized polycyclic codes with non-zero derivations are also considered, which generalizes the results of \cite{Hartmann,Luis}.

The remainder of the manuscript is organized as follows: Section $2$ contains some preliminaries while Section $3$ discusses skew generalized polycyclic codes and their structural properties. In Section $4$, we investigate BCH lower bounds for a minimum distance of skew generalized polycyclic codes with nonzero derivations. A sufficient condition for a code to be a maximum-distance-separable (MDS) code is also derived. Section $5$ includes some examples of MDS codes and a comparative survey of skew polycyclic codes versus polycyclic codes. Finally, Section $6$ concludes our work.

\section{Preliminaries}	
Recall that a linear code $\mathcal{L}$ of length $n$ over a ring $R$ is considered as an $R$-submodule of $R^n$ and each member of $\mathcal{L}$ is a codeword. The rank of $\mathcal{L}$ is the minimum number of generators for $\mathcal{L}$ as a module, and the free rank of $\mathcal{L}$ is the rank of $\mathcal{L}$ if $\mathcal{L}$ is free. A linear code of length $n$ and rank $k$ is denoted by $[n,k]$. The Hamming weight for a codeword $c=(c_1,c_2,\dots,c_n),$ is denoted by $w_H(c)$ and defined by $w_H(c)=\mid \{ i~: c_i\neq 0\}\mid.$ Now, the Hamming distance between two codewords $c'$ and $c''$ is defined as $d_H(c',c'')=w_H(c'-c'')$ and for $\mathcal{L}$, it is $d_H(\mathcal{L})=\min\{d_H(c',c'')~:~c'\neq c'',c',c''\in \mathcal{L}\}.$ Therefore, an $[n,k]$ linear code is compactly represented by $[n,k,d_H]$ where $d_H$ is the Hamming distance. Also, the Euclidean inner product $``\cdot "$ between any two vectors $c'=(c'_1,c'_2,\dots,c'_n)$ and $c''=(c''_1,c''_2,\dots,c''_n)$ of $R^n$ is defined by $c'\cdot c''=\sum\limits_{i=1}^nc'_ic''_i$. It is evident that the (Euclidean) dual of $\mathcal{L}$ defined by $\mathcal{L}^{\perp}=\{c'\in R^n~:~c'\cdot c''=0,$ for all $c''\in \mathcal{L}\}$ is a linear code when $\mathcal{L}$ is linear. In addition, $\mathcal{L}$ is said to be a \textit{self-orthogonal} code if $\mathcal{L}\subseteq \mathcal{L}^{\perp}$, and a \textit{self-dual} code if $\mathcal{L}=\mathcal{L}^{\perp}$.\\ If $\mathcal{L}$ is an $[n, k, d]$ code, then from
the Singleton bound, its minimum distance is bounded above by $d \leq n - k + 1.$
A code that satisfies the above bound with equality is called \emph{maximum-distance-separable (MDS)}. A code is \emph{almost MDS} if its minimum distance is one unit less than the MDS bound and \emph{optimal} if it has the highest possible minimum distance for a given length and dimension.
\begin{definition}
	 Let $\mathscr{R}$ be a finite ring with unity and $\tau$ be an automorphism of $\mathscr{R}$. Then a map $\delta_\tau : \mathscr{R} \rightarrow \mathscr{R}$ is said to be a $\tau$-derivation of $\mathscr{R}$  if
	 \begin{enumerate}
	 \item $\delta_\tau(r+s)=\delta_\tau(r)+\delta_\tau(s)$;
	 \item $\delta_\tau(rs)=\delta_\tau(r)s+\tau(r)\delta_\tau(s).$
\end{enumerate} for all $r,s\in \mathscr{R}$.
\end{definition}
 Let $\mathbb{F}_q$ be a finite field of order $q=p^m$, where $m$ is a positive integer. Suppose $\rho$ is an automorphism of $\mathbb{F}_q$ defined by $\rho(a)=a^{p^t}$ and  $\Im_\rho$ is a $\rho$-derivation of $\mathbb{F}_q$ where $t$ is a positive integer. Further, the order of automorphism $\rho$ is given by  $|\langle \rho \rangle|=\frac{m}{gcd(m,t)}$, and in particular,  $|\langle \rho \rangle|=\frac{m}{t}$ when $t|m$.  We consider the set
$$\mathbb{F}_q[z;\rho,\Im_\rho]=\{b_ez^e+\cdots+b_1z+b_0~|~ b_i\in \mathbb{F}_q~ \text{and}~ e\in \mathbb{N}\}.$$ Then $\mathbb{F}_q[z;\rho,\Im_\rho]$ is a noncommutative ring under the usual addition of polynomials and multiplication defined with respect to $za=\rho(a)z+\Im_\rho(a)$ for all  $ a\in \mathbb{F}_q.$ \\
Now, consider $$\mathscr{R}[z;\tau,\delta_\tau]=\{a_lz^l+\cdots+a_1z+a_0~|~ a_i\in \mathscr{R}~ \text{and}~ l\in \mathbb{N}\}$$ where $\tau$ is an automorphism of $\mathscr{R}$ and  $\delta_\tau$ is a $\tau$-derivation of $\mathscr{R}$. Then $\mathscr{R}[x;\tau,\delta_\tau]$ is a noncommutative ring under the usual addition of polynomials and multiplication defined with respect to $zr=\tau(r)z+\delta_\tau(r)$ for all $r\in \mathscr{R}$ and extended to all elements of $\mathscr{R}[z;\tau,\delta_\tau]$
 by associativity and distributivity. This ring is known as a skew polynomial ring.

\begin{definition} \cite{N}
	 A pseudo-linear transformation $T: \mathscr{R}^n\rightarrow \mathscr{R}^n$ is an additive map defined by
	\begin{equation}\label{4eq1}
		T= \tau(v)M + \delta_\tau(v),
	\end{equation}
	where $v= (v_1,v_2,\dots,v_n)\in \mathscr{R}^n$, $\tau(v)= (\tau(v_1),\tau(v_2),\dots,\tau(v_n))\in \mathscr{R}^n$, $M$ is a matrix of order $n\times n$ over $\mathscr{R}$ and $\delta_\tau(v)= (\delta_\tau(v_1),\delta_\tau(v_2),\dots,\delta_\tau(v_n))\in \mathscr{R}^n$. If $\delta_\tau=0$, then $T$ is known as \textit{semi-linear transformation}.
\end{definition}
Now, we consider the iterated skew polynomial ring over the ring $\mathscr{R}$. Let $S= \mathscr{R}[z_1;\tau_1,\delta_{\tau_1}] $ be a skew polynomial
ring over the ring $\mathscr{R}$, where $\tau_1$ is an automorphism and $\delta_{\tau_1}$ is a $\tau_1$-derivation of $\mathscr{R}$. If $\tau_2$ is an endomorphism and $\delta_{\tau_2}$ is a $\tau_2$-derivation of $S$, then the skew polynomial ring $B = S[z_2;\tau_2,\delta_{\tau_2}]$ is called an iterated skew polynomial ring over $\mathscr{R}$. Also,  $B = \mathscr{R}[z_1;\tau_1,\delta_{\tau_1}][z_2;\tau_2,\delta_{\tau_2}]$.

For brevity of notation, let $B_i= \mathscr{R}[z_1;\tau_1,\delta_{\tau_1}][z_2;\tau_2,\delta_{\tau_2}]\cdots[z_i;\tau_i,\delta_{\tau_i}] $ and finite sets $H=\{\tau_1,\tau_2,...,\tau_n\}$ and $D= \{\delta_{\tau_1},\delta_{\tau_2},...,\delta_{\tau_n}\}$ where $\tau_i$ is an endomorphism of $B_{i-1}$ and $\delta_{\tau_i}$ is the $\tau_i$-derivation of $B_{i-1}$, for $i=1,2,...,n$. Then, by induction, $B_i$ are the iterated skew polynomial rings over $\mathscr{R}$, $2\leq i \leq n$. If $H$ has only identity automorphism, then
$B_i$ is an iterated skew polynomial ring of derivation type, and if $D$ has only zero derivation, then $B_i$ is the iterated skew polynomial ring of automorphism type over $\mathscr{R}$.

Let $\preceq$ be the lexicographical order on $\mathbb{Z}\times \mathbb{Z}$. Now, for any $(c,d), (c',d')\in \mathbb{Z}\times \mathbb{Z}$, $(c,d)\preceq (c',d')$ if and only if $c < c'$ or $c=c'$, $d\leq d'$. Furthermore, if $(c,d)\leq (c',d')$ and $(c,d)\neq (c',d')$, then we write $(c,d)\prec (c',d')$ and $\preceq$ is a total order on $\mathbb{Z}\times \mathbb{Z}$. For any polynomial $l(z)\in \mathscr{R}[z_1;\tau_1,\delta_{\tau_1}][z_2;\tau_2,\delta_{\tau_2}]$, we define
\begin{center}
	$\mathcal{V}_l= \{(a_1,a_2)~ |~ l(z)$ contains a term $mz_1^{a_1}z_2^{a_2}, ~m\in \mathscr{R},~ m\neq 0\}$.
\end{center}
Recall from Sharma and Bhaitwal\cite{A} that the lex-degree of $l(z)$, represented by \\$lexdeg(l(z))$, is the greatest element of $\mathcal{V}_l$ with respect to the total order $\preceq$ on $\mathcal{V}_l$. The lexicographical order $\preceq$ on $\mathscr{R}[z_1;\tau_1,\delta_{\tau_1}][z_2;\tau_2,\delta_{\tau_2}]$ is defined as $l(z)\preceq g(z)$ if and only if $lexdeg(l(z))\preceq lexdeg(g(z)),$ for any $l(z),g(z)\in \mathscr{R}[z_1;\tau_1,\delta_{\tau_1}][z_2;\tau_2,\delta_{\tau_2}]$. The term of the polynomial $l(z)$ corresponding to its lex-degree is said to be the lex-leading term, and its coefficient is known as the lex-leading coefficient.

Now, we present some basic results for iterated skew polynomial rings.
\begin{lemma} \cite[Lemma~3.6]{Gr}
	Let $R$ be a ring and $\delta,$ a derivation on $R$. Let
	$S=R[z; \delta]$ be the skew polynomial ring of derivation type
	over $R$ and $\delta_1$ be another derivation of $R$. Then $\delta_1$ can be extended to a derivation
	of $S$ by $\delta_1(z)=0$ if and only if $\delta_1$ commutes with $\delta$.
\end{lemma}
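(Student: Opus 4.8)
The plan is to prove both directions at once by first pinning down the only possible extension and then testing it against the single nontrivial relation of $S$. Since $S=R[x;\delta]$ is free as a left $R$-module with basis $\{x^i\}_{i\ge 0}$, every element has a unique expression $f=\sum_i a_i x^i$ with $a_i\in R$. I would first observe that if $\delta_1$ extends to a derivation of $S$ with $\delta_1(x)=0$, then its values are completely forced: applying the Leibniz rule $\delta_1(ab)=\delta_1(a)b+a\delta_1(b)$ (recall $\sigma=\mathrm{id}$ here, so this is the ordinary Leibniz identity) together with $\delta_1(x)=0$ gives $\delta_1(x^i)=0$ by induction, whence $\delta_1\bigl(\sum_i a_i x^i\bigr)=\sum_i \delta_1(a_i)x^i$. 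Conversely, this same formula \emph{defines} an additive map $D\colon S\to S$, and it is well-defined precisely because the representation $\sum_i a_i x^i$ is unique. So the entire question reduces to determining when $D$ satisfies the Leibniz rule.

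Next I would reduce the Leibniz check to the generators. Because $D$ is additive and $R\cup\{x\}$ generates $S$ as a ring, a routine bilinearity-plus-induction argument shows that $D$ is a derivation as soon as the Leibniz identity holds on the products $ab$ (for $a,b\in R$), $ax$, and $xa$ (for $a\in R$). The first two are immediate: for $a,b\in R$ we have $D(ab)=\delta_1(ab)=\delta_1(a)b+a\delta_1(b)$ since $\delta_1$ is already a derivation on $R$, and $D(ax)=\delta_1(a)x=D(a)x+aD(x)$ because $D(x)=0$. Everything therefore hinges on the one remaining relation $xa=ax+\delta(a)$.

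The crux is to compare the actual value of $D(xa)$ with the value the Leibniz rule demands. On one hand, since $xa=ax+\delta(a)$ holds as an identity of elements, $D$ applied honestly gives $D(xa)=D(ax)+D(\delta(a))=\delta_1(a)x+\delta_1(\delta(a))$. On the other hand, treating $xa$ as a product, the Leibniz-required value is $D(x)a+xD(a)=x\,\delta_1(a)$, and expanding via the commutation rule $xr=rx+\delta(r)$ with $r=\delta_1(a)$ yields $x\,\delta_1(a)=\delta_1(a)x+\delta(\delta_1(a))$. Equating the two expressions forces $\delta_1(\delta(a))=\delta(\delta_1(a))$ for all $a\in R$, that is, $\delta_1\delta=\delta\delta_1$. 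This settles both implications simultaneously: if $\delta_1$ commutes with $\delta$ the two expressions coincide, so $D$ respects the defining relation and hence is a derivation of $S$ extending $\delta_1$ with $D(x)=0$; and if such an extension exists, the identical comparison forces the commutation.

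I expect the main obstacle to be organizational rather than computational, namely justifying cleanly that verifying the Leibniz rule on the generating relations suffices to conclude it on all of $S$, so that the single relation $xa=ax+\delta(a)$ genuinely controls the whole extension. Once that reduction is carefully stated, the commutator condition $\delta\delta_1=\delta_1\delta$ drops out of the one-line comparison above.
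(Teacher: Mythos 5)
A preliminary remark: the paper contains no proof of this lemma at all --- it is imported verbatim, with citation, from \cite[Lemma~3.6]{Gr} --- so there is no in-paper argument to compare yours against; what follows assesses your proposal on its own terms.

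Your necessity direction is correct and complete: computing $D(xa)$ once by the Leibniz rule, $D(xa)=D(x)a+xD(a)=x\delta_1(a)=\delta_1(a)x+\delta(\delta_1(a))$, and once by additivity across the identity $xa=ax+\delta(a)$, giving $\delta_1(a)x+\delta_1(\delta(a))$, then comparing coefficients via freeness of $S$ over $R$, forces $\delta\delta_1=\delta_1\delta$. The observation that any extension is forced to be $D\bigl(\sum_i a_ix^i\bigr)=\sum_i\delta_1(a_i)x^i$, and that this map is well defined, is also right. The genuine gap is in the sufficiency direction, at the sentence asserting that a ``routine bilinearity-plus-induction argument'' shows $D$ is a derivation as soon as Leibniz holds on the products $ab$, $ax$, $xa$. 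That principle is false for additive maps in general: on $\mathbb{Z}[x]$ (i.e. $R=\mathbb{Z}$, $\delta=0$) the additive map with $D(1)=0$, $D(x)=1$, $D(x^2)=2x$ and $D(x^i)=0$ for $i\ge 3$ satisfies the Leibniz identity on every product of two elements of $\mathbb{Z}\cup\{x\}$, yet $D(x\cdot x^2)=0\neq 3x^2=D(x)x^2+xD(x^2)$, so it is not a derivation. The structural reason the naive induction stalls is that, writing $L(f,g)=D(fg)-D(f)g-fD(g)$, the standard exchange identity only converts the unknown $L(f,gh)$ into $L(fg,h)$ (given $L(f,g)=L(g,h)=0$), and the pair $(fg,h)$ is not ``shorter'' than $(f,gh)$; so Leibniz on generator pairs alone does not propagate to all of $S$.

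The gap is repairable, and both repairs pivot on exactly the identity you isolated. (a) Induct on $i$ in the statement ``$L(ax^i,bx^j)=0$ for all $a,b\in R$ and all $j$'': the case $i=0$ uses only that $\delta_1$ is a derivation of $R$; for the step, write $ax^i\cdot bx^j=ax^{i-1}\cdot\bigl(bx^{j+1}+\delta(b)x^j\bigr)$, apply the inductive hypothesis and biadditivity, and the two sides of the Leibniz identity are found to differ by $ax^{i-1}\bigl(\delta_1\delta(b)-\delta\delta_1(b)\bigr)x^j$, which vanishes precisely by commutation. (b) Alternatively, invoke the universal property of the Ore extension: in the trivial extension $T=S\oplus S$ with product $(u,u')(v,v')=(uv,\,uv'+u'v)$, the assignment $r\mapsto(r,\delta_1(r))$, $x\mapsto(x,0)$ is compatible with the defining relation $xr=rx+\delta(r)$ if and only if $\delta\delta_1(r)=\delta_1\delta(r)$ --- your crux identity --- and the induced ring homomorphism $S\to T$ is $f\mapsto(f,D(f))$, whose multiplicativity is exactly the Leibniz rule for $D$. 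So your outline identifies the right mechanism and the right critical relation, but the reduction step as stated is not a valid argument and must be replaced by (a) or (b).
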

\begin{theorem} \cite[Theorem~3.7]{Gr}
	Let $R$ be a ring and $D= \{\delta_1,\delta_2,...,\delta_n\},$ a finite set of derivations of $R$.
	Let $B_i$ be the set of all polynomials in indeterminates $z_{1}, z_{2}, \cdots, z_{i}$ with coefficients in $R$, for $i = 0,1,2, \cdots, n$,
	where $B_0=R$. In $B_n$, addition is usual and
	multiplication is defined with respect to $z_i r = r z_i + \delta_i(r)$ for all $r$ in $R$,
	$z_iz_j=z_jz_i$  for $i,j=1,2,...,n$.
	Then $B_i$ is a skew polynomial ring (of derivation type)
	over $B_{i-1}$, for all $i=1,2,...,n$ if and only if $\delta_i\delta_j=\delta_j\delta_i$, for
	all $i,j=1,2,...,n$.
\end{theorem}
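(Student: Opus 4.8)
The plan is to prove the two implications separately, putting essentially all the work into the sufficiency direction, which I would run as an induction on $i$ powered by the extension criterion of the preceding lemma. For necessity, assume each $B_i$ is a skew polynomial ring of derivation type over $B_{i-1}$, so that the multiplication on $B_n$ is associative, the monomials $x_1^{a_1}\cdots x_n^{a_n}$ form a left $R$-basis, and the defining relations $x_i r = r x_i + \delta_i(r)$ and $x_i x_j = x_j x_i$ hold. Fixing $i\neq j$ and $r\in R$, I would expand $x_i(x_j r)$ and $x_j(x_i r)$ by applying the two relations repeatedly, obtaining
\begin{align*}
x_i(x_j r) &= r\,x_i x_j + \delta_i(r)\,x_j + \delta_j(r)\,x_i + \delta_i(\delta_j(r)),\\
x_j(x_i r) &= r\,x_j x_i + \delta_j(r)\,x_i + \delta_i(r)\,x_j + \delta_j(\delta_i(r)).
\end{align*}
Since $x_ix_j=x_jx_i$, associativity forces $x_i(x_j r)=(x_ix_j)r=(x_jx_i)r=x_j(x_i r)$; comparing the coefficients of the basis monomial $1$ then gives $\delta_i(\delta_j(r))=\delta_j(\delta_i(r))$ for every $r$, i.e.\ $\delta_i\delta_j=\delta_j\delta_i$.

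For sufficiency, assume all derivations commute and prove by induction on $i$ the strengthened statement $P(i)$: $B_i$ is an iterated skew polynomial ring of derivation type over $R$, and each $\delta_j$ with $j>i$ extends to a derivation $\tilde\delta_j$ of $B_i$ satisfying $\tilde\delta_j(x_k)=0$ for all $k\le i$, with these extensions commuting pairwise on $B_i$. The base case $P(0)$ is exactly the hypothesis that the $\delta_j$ commute on $B_0=R$. For the step $P(i-1)\Rightarrow P(i)$, the hypothesis supplies an extension of $\delta_i$ to a derivation of $B_{i-1}$ vanishing on $x_1,\dots,x_{i-1}$, so $B_i=B_{i-1}[x_i;\delta_i]$ is a well-defined skew polynomial ring of derivation type over $B_{i-1}$. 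To extend each remaining $\delta_j$ ($j>i$) to $B_i$ with $\tilde\delta_j(x_i)=0$, I would invoke the preceding lemma with base ring $B_{i-1}$, defining derivation $\delta_i$, and second derivation $\tilde\delta_j$; its commutation hypothesis $\tilde\delta_j\,\delta_i=\delta_i\,\tilde\delta_j$ on $B_{i-1}$ is precisely the pairwise-commutation clause furnished by $P(i-1)$, so the extension exists and is unique.

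The main obstacle is the final clause, namely that the freshly extended derivations still commute pairwise on the larger ring $B_i$. The clean way to handle this is to observe that the commutator $[\tilde\delta_j,\tilde\delta_l]=\tilde\delta_j\tilde\delta_l-\tilde\delta_l\tilde\delta_j$ of two (ordinary) derivations is again a derivation of $B_i$, since the Leibniz cross-terms cancel. A derivation is determined by its values on a generating set, so it suffices to show $[\tilde\delta_j,\tilde\delta_l]$ vanishes on $R\cup\{x_1,\dots,x_i\}$: on $R$ this holds because the original $\delta_j,\delta_l$ commute, and on each $x_k$ with $k\le i$ it holds because both extensions send $x_k$ to $0$. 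Hence $[\tilde\delta_j,\tilde\delta_l]=0$ on all of $B_i$, completing $P(i)$. Taking $i=n$ yields that every $B_i$ is a skew polynomial ring of derivation type over $B_{i-1}$, which together with the necessity direction proves the theorem. The only care required throughout is the bookkeeping of the iterated extensions and the verification that commutation genuinely propagates through each adjunction of a new variable.
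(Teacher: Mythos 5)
The paper itself gives no proof of this statement: it is quoted verbatim from \cite[Theorem~3.7]{Gr}, with the preceding extension lemma (\cite[Lemma~3.6]{Gr}, the paper's Lemma 2.1) stated immediately before it as the intended key ingredient. Your proof is correct and follows exactly the route that lemma is designed for --- necessity by expanding $x_i(x_jr)$ and $x_j(x_ir)$ and comparing coefficients of $1$, sufficiency by induction with the strengthened hypothesis that all later derivations extend to $B_i$ killing the adjoined variables and remain pairwise commuting, the last clause settled by the observation that the commutator of two derivations is a derivation vanishing on a generating set --- so there is nothing of substance to flag.
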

\begin{definition}\cite{Imai}
	A binary two-dimensional code of area $s l$ is the set of $s\times l$ arrays over $\mathbb{F}_2$, called codewords or code-arrays. A two-dimensional code $\mathcal{L}$ is said to be linear if and only if $\mathcal{L}$ forms a subspace of the $sl$-dimensional space of the $s\times l$ arrays over $\mathbb{F}_2$. A two-dimensional cyclic code is defined as a two dimensional linear code such that for each code-array $C$, all
	the arrays obtained by permuting the columns or the rows of $C$ cyclically are also code-arrays.
\end{definition}
Now, we fix monic polynomials
\begin{equation}\label{6eq2}
	f(z_1,z_2)=\sum_{a_1=0}^{l}\sum_{a_2=0}^{s} r_{a_1,a_2}z_1^{a_1}z_2^{a_2}\in B=\mathscr{R}[z_1;\tau_1,\delta_{\tau_1}][z_2;\tau_2,\delta_{\tau_2}],
\end{equation}
\begin{equation}\label{6eq32}
	\textswab{f}(z_1,z_2)=\sum_{a_1=0}^{l}\sum_{a_2=0}^{s} e_{a_1,a_2}z_1^{a_1}z_2^{a_2}\in B'= \mathbb{F}_q[z_1;\rho_1,\Im_{\rho_1}][z_2;\rho_2,\Im_{\rho_2}]
\end{equation}
where $r_{a_1,a_2}\in \mathscr{R}$ and $e_{a_1,a_2}\in \mathbb{F}_q$.

Define a map $\Gamma_f : \mathscr{R}^n\rightarrow B/Bf(z_1,z_2)$ by $$\Gamma_f(c)=[c(z_1,z_2)] \in B/Bf(z_1,z_2)$$ for all $ c\in \mathscr{R}^n$ where $[c(z_1,z_2)]\in B/Bf(z_1,z_2)$ denotes the class of polynomials and $$c(z_1,z_2)=\sum_{a_1=0}^{l-1}\sum_{a_2=0}^{s-1} c_{a_1,a_2}z_1^{a_1}z_2^{a_2},$$ $c_{a_1,a_2}\in \mathscr{R}^n$, $n=ls$ and $c=(c_{ij})\in \mathscr{R}^n$  is an $l\times s$-arrays. Analogous to classical cyclic codes, we associate the array $c\in \mathscr{R}^n$ to an element $\sum_{a_1=0}^{l-1}\sum_{a_2=0}^{s-1} c_{a_1,a_2}z_1^{a_1}z_2^{a_2}$ in the quotient module $B/Bf(z_1,z_2)$. One can easily check that $\Gamma_f$ is an $\mathscr{R}$-linear isomorphism.
\begin{definition}\label{6def2}
	Let $\mathcal{L}\subseteq \mathscr{R}^n$ be a non-empty set and $n=ls$. Then
	\begin{enumerate}
		\item Suppose $\mathcal{L}$ is an linear code over $\mathscr{R}$ of length $n=sl$ in which each codeword is viewed as an $l\times s$ array.
		If $\mathcal{L}$ is closed under row $\tau_{1}$-shift and column $\tau_{2}$-shift of codewords, then  $\mathcal{L}$ is a $2$-dimensional skew cyclic code of size $l \times s$ over $\mathscr{R}$ under $\tau_{1}$ and $\tau_{2}$.
		\item  If $\mathcal{L}$ is an $\mathscr{R}$-linear code such that $T_i(\mathcal{L})\subseteq \mathcal{L}$, then $\mathcal{L}$ is an $(M_i,\tau_i,\delta_{\tau_i})$-skew polycyclic code for $i=1,2$, where $M_{1}$ is of order $l\times l$, $M_{2}$ is of order $s\times s$, and $T_i(\mathcal{L})= \{T_i(v)~|~ v\in \mathcal{L}\}$. Moreover, in the case of two-dimensional linear code, the pseudo-linear map $T_i$ is defined as follows.
		\begin{itemize}
        \item  $T_1(v_{0,a_2}, v_{1,a_2},\dots, v_{l-1,a_
				2})= \tau_1(v_{0,a_2}, v_{1,a_2},
			\dots, v_{l-1,a_
				2})M_1 + \delta_{\tau_1}(v_{0,a_2}, v_{1,a_2},\\\dots, v_{l-1,a_
				2})$, where $0 \leq a_2 \leq s-1$ and
			$$M_{1}=\begin{pmatrix}
				0 & 1  & \dots & 0\\
				\vdots & \vdots & \ddots & \vdots\\
				0 & 0  & \dots & 1\\
				-r_{0,0} & -r_{1,0} & \dots & -r_{l-1,0}
			\end{pmatrix}.$$
			\item $T_2(v_{a_{1},0}, v_{a_1,1},\dots, v_{a_1,s-1})= \tau_2(v_{a_1,0}, v_{a_1,1},\dots, v_{a_1,s-1})M_2 + \delta_{\tau_2}(v_{a_1,0}, v_{a_1,1},\\
			\dots, v_{a_1,s-1})$, where $0 \leq a_1 \leq l-1$ and
			$$M_{2}=\begin{pmatrix}
				0 & 1  & \dots & 0\\
				\vdots & \vdots & \ddots & \vdots\\
				0 & 0  & \dots & 1\\
				-r_{0,0} & -r_{0,1} & \dots & -r_{0,s-1}
			\end{pmatrix}.$$
			
		\end{itemize}
		\item $\mathcal{L}$ is an $(f,\tau_1,\tau_2,\delta_{\tau_1},\delta_{\tau_2})$-skew polycyclic code if $\mathcal{L}$ is an $\mathscr{R}$-linear code such that $T_{i_f}(\mathcal{L})\subseteq \mathcal{L}$ for $i=1,2$, where $T_{i_f}$ is as defined above, $M_{1}=(M_{1})_f$ is of order $l\times l,$ $M_{2}=(M_{2})_f$ is of order $s\times s$ defined as above and $T_{i_f}(\mathcal{L})= \{T_{i_f}(v)~|~ v\in \mathcal{L}\}$.
        \item Two $l \times s$ arrays $c=(c_{ij})\in B/Bf(z_1,z_2)$ and $d=(d_{ij})\in B/Bf(z_1,z_2)$ are said to be orthogonal if and only if
$$\sum_{i=0}^{l-1}\sum_{j=0}^{s-1}c_{ij}d_{ij}=0.$$ Let $\mathcal{L}^{\perp}$ denotes the dual of $\mathcal{L}$.  Then a polynomial $d_(z_1,z_2)\in B/Bf(z_1,z_2)$ is in $\mathcal{L}^{\perp}$ if and only if $c(z_1,z_2)d(z_1,z_2)=0\in B/Bf(z_1,z_2)$ for all codewords $c(z_1,z_2)$ of $\mathcal{L}$.
The dual code of a two-dimensional linear code $\mathcal{L}$ is the set of all $l \times s$  arrays orthogonal to all codewords of $\mathcal{L}$.
	\end{enumerate}
\end{definition}
\section{Skew generalized polycyclic codes and their properties}
In this section, we discuss the algebraic properties of  skew generalized polycyclic codes.  We generalize skew polycyclic codes  to two-dimensional skew polycyclic codes in $B = \mathscr{R}[z_1;\tau_1,\delta_{\tau_1}][z_2;\tau_2,\delta_{\tau_2}]$.
The following remarks are the direct consequence of the previous discussion.
\begin{remark}
Let $T_i$ be a pseudo-linear transformation on $\mathscr{R}^n$ for $i=1,2$. If $p(z_1,z_2)=\sum_{j=0}^{m_1}\sum_{k=0}^{m_2}p_{u,v}z_1^{j}z_2^{k}$, then $p(T_1,T_2)=\sum_{j=0}^{m_1}\sum_{k=0}^{m_2}p_{u,v}T_1^{j}T_2^k$ is not a pseudo-linear transformation.
\end{remark}
\begin{remark}\label{6ref1}
	We have $\Gamma_f(p(T_{1_f}T_{2_f})(v))=[p(z_1,z_2)v(z_1,z_2)]$ for every $p(z_1,z_2)\in B$ and $v\in \mathscr{R}^n$ such that $\Gamma_f(v)=[v(z_1,z_2)]$. This implies that the map $\Gamma_f$ is a left $B$-module isomorphism when $B/Bf(z_1,z_2)$ is a $B$-module with the product $p(z_1,z_2)[v(z_1,z_2)]=[p(z_1,z_2)v(z_1,z_2)]$ for any $p(z_1,z_2),v(z_1,z_2)\in B$.
\end{remark}
Based on the above facts, we have the following characterization of $(f,\tau_1,\tau_2,\delta_{\tau_1},\delta_{\tau_2})$-skew polycyclic codes.
\begin{theorem}
	Let $\mathcal{L}$ be a nonempty subset of $\mathscr{R}^n$. Then $\mathcal{L}$ is an $(f,\tau_1,\tau_2,\delta_{\tau_1},\delta_{\tau_2})$-skew polycyclic code if and only if $\Gamma_f(\mathcal{L})$ is a left $B$-submodule of $B/Bf(z_1,z_2)$.
\end{theorem}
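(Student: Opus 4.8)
The plan is to reduce the statement to the standard fact that a bijective module homomorphism restricts to a bijection between submodule lattices, using the identity recorded in Remark 3.1. First I would fix the left $B$-module structure on $R^n$ given there, namely $t\cdot\bar{v}=T_f(\bar{v})$, and recall that under this structure $\Gamma_f$ satisfies $\Gamma_f\big(p(T_f)(\bar{v})\big)=[pv]=p\,\Gamma_f(\bar{v})$ for every $p\in B$ and every $\bar{v}\in R^n$ with $\Gamma_f(\bar{v})=[v]$. Combined with the earlier observation that $\Gamma_f$ is an $R$-linear isomorphism, this shows $\Gamma_f$ is a left $B$-module isomorphism between $(R^n,T_f)$ and $B/Bf$.

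The crux is then to identify the defining conditions of an $(f,\sigma_1,\sigma_2,\delta_1,\delta_2)$-skew code with membership in the class of left $B$-submodules of $(R^n,T_f)$. For the forward direction I would assume $\mathcal{C}$ is such a skew code and argue that $\Gamma_f(\mathcal{C})$ is closed under the left $B$-action. Since $\mathcal{C}$ is $R$-linear and $\Gamma_f$ is an $R$-isomorphism, $\Gamma_f(\mathcal{C})$ is at least an $R$-submodule of $B/Bf$; it remains to show it absorbs multiplication by arbitrary $p\in B$. Because $B$ is generated as a ring over $R$ by the indeterminates $x_1$ and $x_2$, it suffices to check closure under multiplication by $x_1$ and by $x_2$. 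For a class $[c]=\Gamma_f(\bar{c})$ with $\bar{c}\in\mathcal{C}$, the Remark gives $x_j[c]=[x_jc]=\Gamma_f\big(T_f(\bar{c})\big)$ for the relevant (row or column) shift, and the hypothesis $T_f*\mathcal{C}\subseteq\mathcal{C}$ places $T_f(\bar{c})$ back in $\mathcal{C}$; iterating and using $R$-linearity then yields $p[c]\in\Gamma_f(\mathcal{C})$ for all $p$. The converse is the same computation read backwards: if $\Gamma_f(\mathcal{C})$ is a left $B$-submodule, then pulling the $R$-action and the actions of $x_1,x_2$ back through $\Gamma_f^{-1}$ shows that $\mathcal{C}$ is $R$-linear and that $T_f(\bar{c})\in\mathcal{C}$ for every $\bar{c}\in\mathcal{C}$, i.e. $T_f*\mathcal{C}\subseteq\mathcal{C}$.

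The main obstacle I anticipate is the bookkeeping that turns the single invariance condition $T_f*\mathcal{C}\subseteq\mathcal{C}$ into closure under the full two-variable action. One must make precise how $T_f$ simultaneously realizes multiplication by $x_1$ (the row $\sigma_1$-shift governed by the companion matrix $M_1$) and by $x_2$ (the column $\sigma_2$-shift governed by $M_2$), and then confirm that closure under these two generators, together with $R$-linearity, propagates to closure under every monomial $x_1^{a_1}x_2^{a_2}$ and hence every $p\in B$. I would handle this by induction on the lexicographic degree $lexdeg$ introduced earlier, reducing a general $p$ to $R$-linear combinations of products of the generators and applying the Remark's identity at each step. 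Once this identification is in place, the equivalence follows formally, since $\Gamma_f$ is a bijective left $B$-module homomorphism and therefore matches the $(f,\sigma_1,\sigma_2,\delta_1,\delta_2)$-skew codes in $R^n$ with the left $B$-submodules of $B/Bf$.
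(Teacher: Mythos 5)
Your proof is correct and takes essentially the same route the paper intends: the paper states this theorem with no written proof, presenting it as an immediate consequence of Remark 3.1's fact that $\Gamma_f$ is a left $B$-module isomorphism when $R^n$ carries the $T_f$-action, which is precisely the core of your argument. Your extra bookkeeping (reducing closure under all of $B$ to $R$-linearity plus closure under the $x_1$- and $x_2$-actions, then propagating to monomials) simply fills in the details the paper leaves implicit.
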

\begin{proof}
	Suppose  $\mathcal{L}$ is a non-empty subset of $\mathscr{R}^n$ and an $(f,\tau_1,\tau_2,\delta_{\tau_1},\delta_{\tau_2})$-skew polycyclic code. Then $\mathcal{L}$ is $\mathscr{R}$-linear and $T_{i_f}(\mathcal{L})\subseteq \mathcal{L}$, where $T_{i_f}$ is as given in Definition (\ref{6def2}(3)) and $T_{i_f}(\mathcal{L})= \{T_{i_f}(v)~|~ v\in \mathcal{L}\}$ for $i=1,2$. For $c\in \mathcal{L}$ and $p(z_1,z_2)\in B$, we have
	\begin{align*}
		\Gamma_f(p(T_{1_f}T_{2_f})(c))=&[p(z_1,z_2)c(z_1,z_2)]~~(\text{as $\mathcal{L}$ is invariant under $T_{i_f}$})\\
		=& p(z_1,z_2)[c(z_1,z_2)]~~(\text{as~ $B/Bf(z_1,z_2)$ is a $B$-module}) \\
		=&p(z_1,z_2)	\Gamma_f(c)\in B/Bf(z_1,z_2).
	\end{align*} Therefore, $\Gamma_f(\mathcal{L})$ is a left $B$-submodule of $B/Bf(z_1,z_2)$.

	Conversely, let $\Gamma_f(\mathcal{L})$ be a left $B$-submodule of $B/Bf(z_1,z_2)$, $c\in \mathcal{L}$ and $r\in \mathscr{R}$. Since $\Gamma_f(\mathcal{L})$ is $\mathscr{R}$-linear, $\Gamma_f(rc)=r\Gamma_f(c)=r[c(z_1,z_2)]=[rc(z_1,z_2)]$. This implies that $rc\in \mathcal{L}$ and hence $\mathcal{L}$ is $\mathscr{R}$-linear. Further, from Remark \ref{6ref1}, $\Gamma_f$ is a left $B$-module isomorphism, i.e.,  $\Gamma_f(p(T_{1_f}T_{2_f})(c))=[p(z_1,z_2)c(z_1,z_2)]$ for every $p(z_1,z_2)\in B$ and $c\in \mathcal{L}$. This shows that $\mathcal{L}$ is invariant under $p(T_{1_f},T_{2_f})$ and hence $\mathcal{L}$ is invariant under $T_{i_f}$. Therefore, $\mathcal{L}$ is an $(f,\tau_1,\tau_2,\delta_{\tau_1},\delta_{\tau_2})$-skew polycyclic code.
\end{proof}
Now, we introduce the definition of skew generalized polycyclic code in $B$.
\begin{definition}
	Let $\mathcal{L}\subseteq \mathscr{R}^n$ be a nonempty set. Then $\mathcal{L}$ is said to be an $(f,\tau_1,\tau_2, \delta_{\tau_1},\delta_{\tau_2})$-skew generalized polycyclic code if $\Gamma_f(\mathcal{L})=Bg(z_1,z_2)/Bf(z_1,z_2)$, where $g(z_1,z_2)\in S[z_2;\tau_2,\delta_{\tau_2}]$ is a monic polynomial such that $f(z_1,z_2)\in Bg(z_1,z_2)$. The polynomial $g(z_1,z_2)\in B$ is said to be the generator polynomial of $\mathcal{L}$ and we write $\mathcal{L}=\langle g(z_1,z_2)\rangle_{n,q}^{\tau_1,\tau_2,\delta_{\tau_1},\delta_{\tau_2}}$.
\end{definition}
It would be useful if immediately from the generator polynomial $g(z_1,z_2)$ we could
deduce the dimension of the code and also write down a
generator matrix. The next theorem allows us to do both together.
\begin{theorem}
	Let $\mathcal{L}=\langle g(z_1,z_2)\rangle_{n,q}^{\rho_1,\rho_2,\Im_{\rho_1},\Im_{\rho_2}}$ be an $(\textswab{f},\rho_1,\rho_2,\Im_{\rho_1},\Im_{\rho_2})$-skew polycyclic code over $\mathbb{F}_q$ where  $g(z_1,z_2)=\sum_{a_1=0}^{k_1}\sum_{a_2=0}^{k_2} g_{a_1,a_2}z_1^{a_1}z_2^{a_2}$ be a monic generating polynomial of $\mathcal{L}$ in $\mathbb{F}_q[z_1;\rho_1,\Im_{\rho_1}][z_2;\rho_2,\Im_{\rho_2}]$ and $lexdeg(g(z_1,z_2))=(k_1,k_2)$. Then
	$$	\mathcal{B'}=\begin{Bmatrix}
		g(z_1,z_2),& z_1g(z_1,z_2),&\dots&, z_1^{l-k_1-1}g(z_1,z_2)\\
		z_2g(z_1,z_2),& z_2z_1g(z_1,z_2),& \dots&, z_2z_1^{l-k_1-1}g(z_1,z_2)	\\
		\vdots& \vdots &\dots& \vdots\\
		z_2^{s-k_2-1}g(z_1,z_2),& z_2^{s-k_2-1}z_1g(z_1,z_2),& \dots&, z_2^{s-k_2-1}z_1^{l-k_1-1}g(z_1,z_2)
	\end{Bmatrix}$$
	is a basis for $\mathcal{L}$ over $\mathbb{F}_q$ and $|\mathcal{L}|=q^{(l-k_1)(s-k_2)}$.
\end{theorem}
\begin{proof}
	Let $\mathcal{L}=\langle g(z_1,z_2)\rangle_{n,q}^{\rho_1,\rho_2,\Im_{\rho_1},\Im_{\rho_2}}$ be an $(\textswab{f},\rho_1,\rho_2,\Im_{\rho_1},\Im_{\rho_2})$-skew polycyclic code over $\mathbb{F}_q$ and $0\neq p(z_1,z_2)\in \mathcal{L}=\langle g(z_1,z_2)\rangle_{n,q}^{\rho_1,\rho_2,\Im_{\rho_1},\Im_{\rho_2}}$. Then there exists $q(z_1,z_2)$ in $B'$ such that $p(z_1,z_2)=q(z_1,z_2)g(z_1,z_2)$. Also, $lexdeg(p(z_1,z_2))\leq (l-1,s-1)$. Hence,
	\begin{align*}
		lexdeg(q(z_1,z_2))=&lexdeg(p(z_1,z_2))-lexdeg(g(z_1,z_2))\\
		&\leq (l-1,s-1)-(k_1,k_2)\\=&(l-k_1-1,s-k_2-1).
	\end{align*}
	Hence, $\mathcal{B'}$ generates $\mathcal{L}$. Next, suppose $\{r_{i,j}(z_1,z_2)|0\leq i\leq l-k_1-1,~ 0\leq j\leq s-k_2-1\}$ is a subset of $\mathbb{F}_q[z_1;\rho_1,\Im_{\rho_1}][z_2;\rho_2,\Im_{\rho_2}]$ such that $$\sum_{i=0}^{l-k_1-1}\sum_{j=0}^{s-k_2-1} e_{i,j}z_1^{i}z_2^{j}g(z_1,z_2)=0.$$ If possible, let $r=\sum_{i=0}^{l-k_1-1}\sum_{j=0}^{s-k_2-1} e_{i,j}$ such that $r\neq 0$. Then $lexdeg~ r\leq (l-k_1-1,s-k_2-1)$ and $rg(z_1,z_2)=0$. Therefore, there exists $t(z_1,z_2) \in \mathbb{F}_q[z_1;\rho_1,\Im_{\rho_1}][z_2;\rho_2,\Im_{\rho_2}]$ such that $rg(z_1,z_2)=0 = t(z_1,z_2)f(z_1,z_2)$ where $\textswab{f}=\sum_{a_1=0}^{l}\sum_{a_2=0}^{s} e_{a_1,a_2}z_1^{a_1}z_2^{a_2}$ and $e_{a_1,a_2}\in \mathbb{F}_q$. Hence,
	\begin{align*}
		lexdeg(t(z_1,z_2)f(z_1,z_2))=&lexdeg(r(z_1,z_2))+lexdeg(g(z_1,z_2))\\
		=&(l-k_1-1,s-k_2-1)+(k_1,k_2)\\
		=& (l-1,s-1),
	\end{align*}
	which is a contradiction. Hence, $r=\sum_{i=0}^{l-k_1-1}\sum_{j=0}^{s-k_2-1} e_{i,j}=0$, implies that $e_{i,j}=0$ for all $0\leq i\leq l-k_1-1,~ 0\leq j\leq s-k_2-1$. Therefore,  $\mathcal{B'}$ is a basis for $\mathcal{L}$ over $\mathbb{F}_q$. Thus, $|\mathcal{L}|=q^{(l-k_1)(s-k_2)}$.
\end{proof}
Let $\mathcal{L}=\langle g(z_1,z_2)\rangle_{n,q}^{\rho_1,\rho_2,\Im_{\rho_1},\Im_{\rho_2}}$ be an $(\textswab{f},\rho_1,\rho_2,\Im_{\rho_1},\Im_{\rho_2})$-skew generalized polycyclic code over $\mathbb{F}_q$. From above theorem, we can conclude that $\mathcal{L}$ is a free left $\mathbb{F}_q$-module of dimension $k=(l-k_1)(s-k_2)=n-lk_2-sk_1+k_1k_2$.  Now, we define
\begin{equation}\label{eq3}
	\mathbb{G}_1=
	\begin{pmatrix}
		g\\
		T_{1_\textswab{f}}(g)\\
		\vdots\\
		T_{1_\textswab{f}}^{l-k_1-1}(g)
	\end{pmatrix}
\end{equation}
where $g(z_1,z_2)=\sum_{a_1=0}^{k_1}\sum_{a_2=0}^{k_2} g_{a_1,a_2}z_1^{a_1}z_2^{a_2}\in B'$ and $i=1,2$, i.e., $g$ is an $(k_1+1)\times (k_2+1)$ arrays defined as
$$g= \begin{pmatrix}
	g_{0,0} & g_{0,1}  & \dots & g_{0,k_2}\\
	g_{1,0} & g_{1,1}  & \dots & g_{1,k_2}\\
	\vdots & \vdots & \ddots &  \vdots \\
	g_{k_1,0} & g_{k_1,1} & \dots & g_{k_1,k_2}
\end{pmatrix}.$$ If $[g(z_1,z_2)]=\Gamma_f(g),$ then generator matrix of $\mathcal{L}$ is given by
\begin{equation}\label{eq3}
	\mathbb{G}=
	\begin{pmatrix}
		\mathbb{G}_1\\
		T_{2_\textswab{f}}(	\mathbb{G}_1)\\
		\vdots\\
		T_{2_\textswab{f}}^{s-k_2-1}(	\mathbb{G}_1)
	\end{pmatrix}.
\end{equation}
It is well-known that $dim(\mathcal{L})+dim(\mathcal{L}^\perp)=sl$. Therefore, $dim(\mathcal{L}^\perp)=k'=lk_2+sk_1-k_1k_2$. Now, we have the following theorem for the dual code $\mathcal{L}^\perp$ of $(\textswab{f},\rho_1,\rho_2,\Im_{\rho_1},\Im_{\rho_2})$-skew polycyclic code $\mathcal{L}\subseteq \mathbb{F}_q^n$.
\begin{theorem}
	Let $\textswab{f}(z_1,z_2)\in B'$, as in Equation (\ref{6eq2}), $T_{i_\textswab{f}}$ be its associated pseudo-linear transformation given in Definition (\ref{6def2}(3)) for $i=1,2$ and $\mathscr{R}=\mathbb{F}_q$. If $\textswab{f}(z_1,z_2)=h(z_1,z_2)g(z_1,z_2)\\=g(z_1,z_2)h'(z_1,z_2)$ for some monic skew polynomials $g(z_1,z_2),h(z_1,z_2),h'(z_1,z_2)\in B'$ and $e_{0,0}\neq 0$, then linearly independent columns of the matrix
	$$\mathbb{H}=
	\begin{pmatrix}
		\mathbb{H}_1\\
		T_{1_\textswab{f}}(\mathbb{H}_1)\\
		\vdots\\
		T_{1_\textswab{f}}^{l-1}(\mathbb{H}_1)
	\end{pmatrix}$$
	form a basis of $\mathcal{L}^\perp$ where
	$$\mathbb{H}_1=
	\begin{pmatrix}
		h'\\
		T_{2_\textswab{f}}(h')\\
		\vdots\\
		T_{2_\textswab{f}}^{s-1}(h')
	\end{pmatrix},$$
	$\Gamma_\textswab{f}(h')=[h'(z_1,z_2)]$, $lexdeg(h'(z_1,z_2))=(a,b)$ and $\mathcal{L}=\langle g(z_1,z_2)\rangle_{n,q}^{\rho_1,\rho_2,\Im_{\rho_1},\Im_{\rho_2}}$.
\end{theorem}
\begin{proof}
	Let $a\in \mathcal{L}=\langle g(z_1,z_2)\rangle_{n,q}^{\rho_1,\rho_2,\Im_{\rho_1},\Im_{\rho_2}}$
	and $\Gamma_\textswab{f}(h')=[h'(z_1,z_2)]$. So, $\Gamma_\textswab{f}(a)=[a(z_1,z_2)]=[b(z_1,z_2) g(z_1,z_2)]$, for some $b(z_1,z_2), a(z_1,z_2)\in B'$. For $i=1,2$,  we have $\Gamma_\textswab{f}(a(T_{1_\textswab{f}},T_{2_\textswab{f}})(h'))=[a(z_1,z_2)h'(z_1,z_2)]=[(b(z_1,z_2) g(z_1,z_2))h'(z_1,z_2)]=[b(z_1,z_2) (g(z_1,z_2)h'(z_1,z_2))]\\=[b(z_1,z_2) \textswab{f}(z_1,z_2)]=[0] \in B'/B'f(z_1,z_2)$. This implies $a(T_{1_\textswab{f}},T_{2_\textswab{f}})(h')=0$. Therefore, $$0=a(T_{1_\textswab{f}},T_{2_\textswab{f}})(h')= \sum_{a_1=0}^{l-1}\sum_{a_2=0}^{s-1} a_{a_1,a_2}T_{1_\textswab{f}}^{a_1}T_{2_\textswab{f}}^{a_2}(\bar{h'})$$ where $a(z_1,z_2)=\sum_{a_1=0}^{l-1}\sum_{a_2=0}^{s-1} a_{a_1,a_2}z_1^{a_1}z_2^{a_2}$. This shows that $a\mathbb{H}^t=0$ for any $a\in \mathcal{L}$ and $\mathbb{H}^t$  is the transpose matrix of $\mathbb{H}$. Further, we note that
	\begin{equation*}
		\Gamma_\textswab{f}((T_{1_\textswab{f}}^{e}T_{2_\textswab{f}}^{j})(h'))=z_1^{e}z_2^{j}h'(z_1,z_2)	=
		\begin{cases}
			z_1^{e}z_2^{j}h'(z_1,z_2) & \text{for $0\leq e\leq k_2-1, 0\leq j\leq l-1$}\\
			z_1^{e}z_2^{j}h'(z_1,z_2) & \text{for $k_2\leq e\leq s-1, 0\leq j\leq k_1-1$}
		\end{cases}
	\end{equation*}
	where $lexdeg(h'(z_1,z_2))=(a,b)$.  Hence,
	$$	\begin{Bmatrix}
		h',&T_{2_\textswab{f}}(h'),&\dots&,
		T_{2_\textswab{f}}^{lk_2-1}(h'),\\
		T_{2_\textswab{f}}(h'),&
		T_{2_\textswab{f}}^2(h'),&
		\dots&,
		T_{2_\textswab{f}}^{s-k_2}(h'),\\
		T_{1_\textswab{f}}	T_{2_\textswab{f}}(h'),&
		T_{1_\textswab{f}}T_{2_\textswab{f}}^2(h'),&
		\dots&,
		T_{1_\textswab{f}}T_{2_\textswab{f}}^{s-k_2}(h'), \\
		\vdots& \vdots &\vdots\\
		T_{1_\textswab{f}}^{k_1-1}T_{2_\textswab{f}}(h'),&
		T_{1_\textswab{f}}^{k_1-1}T_{2_\textswab{f}}^2(h'),&
		\dots&,
		T_{1_\textswab{f}}^{k_1-1}T_{2_\textswab{f}}^{s-k_2}(h')
	\end{Bmatrix}$$	are independent and form a basis for $\mathcal{L}^\perp$.
\end{proof}
Also, when $\mathscr{R}$ is commutative, the following theorem gives the idea of the dual code $\mathcal{L}^\perp$ of $(M_i,\tau_i,\delta_{\tau_i})$-skew polycyclic code $\mathcal{L}\subseteq \mathscr{R}^n$ for $i=1,2$.
\begin{theorem}
	Suppose $\mathscr{R}$ is commutative and $\delta_{\tau_i}: \mathscr{R}\rightarrow \mathscr{R}$ is a $\tau_i$-derivation, where $\tau_i\in Aut(\mathscr{R})$ for $i=1,2$. If $\mathcal{L}\subseteq \mathscr{R}^n$ is an $(M_i,\tau_i,\delta_{\tau_i})$-skew polycyclic code, then $\mathcal{L}^\perp$ is a $((M_i^t)_{\tau_i^{-1}}, \tau_i^{-1}, \delta_{\tau_i}'')$-skew polycyclic code with $\tau_i^{-1}$-derivation $\delta_{\tau_i}''=-\tau_i^{-1}\delta_{\tau_i}$ for $i=1,2$, where for a matrix $E=[(a_{mn})],$ we denote $E_{\tau_i^{-1}}=[\tau_i^{-1}(a_{mn})]$ and $E^t$ its transpose matrix.
\end{theorem}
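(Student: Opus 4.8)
The goal is to show that if $\mathcal{C}$ is an $(M_i,\sigma_i,\delta_i)$-skew generalized cyclic code, then $\mathcal{C}^\perp$ is stable under the pseudo-linear transformation built from the transpose matrix $(M_i^t)_{\sigma_i^{-1}}$, the inverse automorphism $\sigma_i^{-1}$, and the derivation $\delta_i''=-\sigma_i^{-1}\delta_i$. The plan is to work directly at the level of the Euclidean inner product and verify the defining containment $T_i'' * \mathcal{C}^\perp \subseteq \mathcal{C}^\perp$ for each $i=1,2$, where $T_i''(\bar{w})=\sigma_i^{-1}(\bar{w})\,(M_i^t)_{\sigma_i^{-1}}+\delta_i''(\bar{w})$.

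\textbf{Step 1: Verify $\delta_i''$ is a $\sigma_i^{-1}$-derivation.} First I would confirm that $\delta_i''=-\sigma_i^{-1}\delta_i$ is indeed a $\sigma_i^{-1}$-derivation of $R$. Additivity is immediate since $\sigma_i^{-1}$ and $\delta_i$ are additive. For the twisted Leibniz rule, I would compute $\delta_i''(ab)$ using $\delta_i(ab)=\sigma_i(a)\delta_i(b)+\delta_i(a)b$ and apply $-\sigma_i^{-1}$, then rearrange to the form $\sigma_i^{-1}(a)\,\delta_i''(b)+\delta_i''(a)\,b$; commutativity of $R$ is used here to move factors past one another. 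This is a short but necessary sanity check justifying that the claimed structure on $\mathcal{C}^\perp$ is well-defined.

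\textbf{Step 2: The adjoint identity.} The heart of the argument is an adjoint relation between $T_i$ and $T_i''$ with respect to the Euclidean inner product. For a semi-linear map $\sigma_i'(\bar{v})M_i$, the natural dual operation sends $\bar{w}\mapsto \sigma_i^{-1}\!\big(\bar{w}\,(M_i^t)\big)$, and applying $\sigma_i^{-1}$ componentwise gives exactly $(M_i^t)_{\sigma_i^{-1}}$ after the automorphism is absorbed. Concretely, I would establish that for all $\bar{v}\in\mathcal{C}$ and all $\bar{w}$,
\[
T_i(\bar{v})\cdot \bar{w} \;=\; \sigma_i\big(\bar{v}\cdot T_i''(\bar{w})\big),
\]
or an equivalent identity after applying $\sigma_i^{-1}$ to both sides. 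The derivation terms contribute the $-\sigma_i^{-1}\delta_i$ piece: expanding $T_i(\bar v)\cdot \bar w = \big(\sigma_i'(\bar v)M_i + \delta_i'(\bar v)\big)\cdot\bar w$ and matching it against $\bar v \cdot T_i''(\bar w)$ forces the derivation in the dual to be precisely $-\sigma_i^{-1}\delta_i$, which is why that sign and composition appear in the statement.

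\textbf{Step 3: Conclude stability of $\mathcal{C}^\perp$.} Take any $\bar{w}\in\mathcal{C}^\perp$; I must show $T_i''(\bar{w})\in\mathcal{C}^\perp$, i.e.\ $\bar{v}\cdot T_i''(\bar{w})=0$ for every $\bar{v}\in\mathcal{C}$. Using the adjoint identity from Step 2, $\sigma_i\big(\bar{v}\cdot T_i''(\bar{w})\big)=T_i(\bar{v})\cdot\bar{w}$. Since $\mathcal{C}$ is $T_i$-stable (it is an $(M_i,\sigma_i,\delta_i)$-skew code), we have $T_i(\bar{v})\in\mathcal{C}$, hence $T_i(\bar{v})\cdot\bar{w}=0$ because $\bar{w}\in\mathcal{C}^\perp$. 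As $\sigma_i$ is an automorphism and thus injective, $\bar{v}\cdot T_i''(\bar{w})=0$, giving $T_i''(\bar{w})\in\mathcal{C}^\perp$. Running this for both $i=1$ and $i=2$ yields the full claim.

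\textbf{Main obstacle.} The delicate point is the bookkeeping in Step 2: getting the automorphism twist on the transpose matrix exactly right. The transpose $M_i^t$ arises from swapping roles of the two index sets in the inner product, while the subscript $\sigma_i^{-1}$ comes from inverting the Frobenius-type twist, and these two operations must be tracked simultaneously without conflating the order in which $\sigma_i^{-1}$ is applied relative to matrix multiplication. I expect that carefully writing out one matrix entry of $T_i(\bar v)\cdot\bar w$ and of $\bar v\cdot T_i''(\bar w)$ in coordinates—using commutativity of $R$ freely—will pin down that the correct dual matrix is $(M_i^t)_{\sigma_i^{-1}}$ and not merely $M_i^t$ or $(M_i)_{\sigma_i^{-1}}^t$, so the order of transposition and twisting is the only genuinely error-prone part of the computation.
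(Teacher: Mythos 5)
Your overall skeleton (stability of $\mathcal{C}^\perp$ via an inner-product computation, using $T_i$-stability of $\mathcal{C}$ and injectivity of $\sigma_i$) matches the paper's, but your Step 2 contains a genuine error: the adjoint identity
\[
T_i(\bar v)\cdot \bar w \;=\; \sigma_i\bigl(\bar v\cdot T_i''(\bar w)\bigr)
\]
is \emph{false} for arbitrary $\bar w$ whenever $\delta_i\neq 0$. Writing both sides in coordinates, the semi-linear parts do agree (each gives $\sigma_i'(\bar v)\, M_i\,\bar w^{\,t}$, so that portion of your bookkeeping is fine), but the derivation parts do not: the left side contributes $\delta_i'(\bar v)\cdot\bar w$, the right side contributes $-\sigma_i'(\bar v)\cdot\delta_i'(\bar w)$, and the twisted Leibniz rule gives exactly
\[
T_i(\bar v)\cdot\bar w \;-\; \sigma_i\bigl(\bar v\cdot T_i''(\bar w)\bigr)\;=\;\delta_i'(\bar v)\cdot\bar w+\sigma_i'(\bar v)\cdot\delta_i'(\bar w)\;=\;\delta_i(\bar v\cdot\bar w),
\]
which is not zero in general. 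Consequently the entry-wise ``matching'' you propose cannot succeed: no choice of dual derivation makes the identity hold on all of $R^n$, so Step 2 as quantified (``for all $\bar w$'') is unprovable.

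The missing idea --- which is the heart of the paper's proof --- is to use orthogonality itself as an input to the identity, not merely as the thing to be concluded. For $\bar a\in\mathcal{C}$ and $\bar b\in\mathcal{C}^\perp$ one has $\bar a\cdot\bar b=0$, hence $\delta_i(\bar a\cdot\bar b)=0$, i.e.\ $\delta_i'(\bar a)\cdot\bar b=-\sigma_i'(\bar a)\cdot\delta_i'(\bar b)$; the paper differentiates the orthogonality relation to trade the derivation term across the inner product, then factors out $(\sigma_i(\bar a))^t$ and applies $\sigma_i^{-1}$, which is precisely where the matrix $(M_i^t)_{\sigma_i^{-1}}$ and the derivation $-\sigma_i^{-1}\delta_i$ emerge. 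Your argument is repairable, because your Step 3 only ever invokes the identity on pairs $(\bar v,\bar w)\in\mathcal{C}\times\mathcal{C}^\perp$, where $\delta_i(\bar v\cdot\bar w)=0$ and the identity is valid --- but the repair is exactly the step you dismiss: the delicate point is not transpose-versus-twist bookkeeping but the derivation term, which transfers sides only modulo $\delta_i(\bar v\cdot\bar w)=0$. A smaller remark on your Step 1: applying $-\sigma_i^{-1}$ to the Leibniz rule yields $\delta_i''(ab)=a\,\delta_i''(b)+\delta_i''(a)\,\sigma_i^{-1}(b)$, a \emph{right} twisted-derivation identity, and commutativity of $R$ alone does not rearrange this into $\sigma_i^{-1}(a)\,\delta_i''(b)+\delta_i''(a)\,b$; the paper sidesteps this issue entirely, since its proof uses only the additivity of $\delta_i''$ and never its Leibniz rule.
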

\begin{proof}
	We define two pseudo-linear transformations $T_i$ and $T_i'$ by $T_i(v)=\tau_i(v)\cdot~ M_i+\delta_{\tau_i}(v)$ and $T_i'(v)=\tau_i^{-1}(v)\cdot~(M_i^t)_{\tau_i^{-1}}+\delta_{\tau_i}''(v)$ for $i=1,2$ for every $v\in \mathscr{R}^n$, respectively. For any $a\in \mathcal{L}$ and $b\in \mathcal{L}^\perp$, we have $0=b\cdot~ T_i^t(a)=b\cdot~ (\tau_i(a)\cdot~ M_i)^t +b\cdot~  (\delta_{\tau_i}(a))^t=b\cdot~ (\tau_i(a)\cdot~ M_i)^t-\delta_{\tau_i}(b)\cdot~ (\tau_i(a))^t.$ This implies $0=b\cdot~ M_i^t\cdot~ (\tau_i(a))^t-\delta_{\tau_i}(b)\cdot~ (\tau(a))^t= (b\cdot~ M_i^t-\delta_{\tau_i}(b))\cdot~ (\tau_i(a))^t$, i.e., $(b\cdot~ M_i^t-\delta_{\tau_i}(b))\cdot~ (\tau_i(a))^t=0$, for $i=1,2$. Therefore, $0=\tau_i^{-1}(0)=\tau_i^{-1}(b\cdot~ M_i^t-\delta_{\tau_i}(b))\cdot~ (a)^t=(\tau_i^{-1}(b)\cdot~ (M_i^t)_{\tau_i^{-1}}-\tau_i^{-1}\delta_{\tau_i}(b))\cdot~(a)^t=(\tau_i^{-1}(b)\cdot~ (M_i^t)_{\tau_i^{-1}}+\delta_{\tau_i}''(b))\cdot~(a)^t=T_i'(b)\cdot~(a)^t$, i.e., $T_i'(b)\cdot~(a)^t=0$ for all $a\in \mathcal{L}$, $b\in \mathcal{L}^\perp$ and $i=1,2$.
\end{proof}
The following examples provide the novelty of our results and the reason we are studying two-dimensional linear codes, especially $(\textswab{f},\rho_1,\rho_2,\Im_{\rho_1}, \Im_{\rho_2}
)$-skew polycyclic codes. Here, all computations are carried out by using Magma software \cite{Bosma}.
\begin{example}
	Let  $\mathbb{F}_4=\mathbb{F}_2(w)$ where $w^2+w+1=0$ and $B=\mathbb{F}_q[z_1;\rho_1,\Im_{\rho_1}][z_2;\rho_2,\Im_{\rho_2}]$ where $\rho_1,~\rho_2: \mathbb{F}_4 \rightarrow \mathbb{F}_4$ are Frobenius automorphisms defined by $\rho_1(a)=\rho_2(a)=a^2$,  $\Im_{\rho_1}: \mathbb{F}_4 \rightarrow \mathbb{F}_4$ is a $\rho_1$-derivation defined by $\Im_{\rho_1}(a)=\rho_1(a)-a$ and $~\Im_{\rho_2}$ is a zero derivation. Let $$f=w^2z_2^3+z_1^2+z_1^2z_2^2+w^2z_1^2z_2+w^2z_2+1=(w^2z_2+z_1^2+1)(w^2z_2+z_2^2+1)$$ and $g=z_2^2+w^2z_2+1$. We have $l=2$, $s=3$, $k_1=0$, $k_2=2$
	$$g=\begin{pmatrix}
	1&w^2&1\\
	0&0&0
	\end{pmatrix},~ M_1=\begin{pmatrix}
	0&1\\
	-1&0
	\end{pmatrix},~
	M_2=\begin{pmatrix}
	0&1&0\\
	0& 0&1\\
	-1&-w^2&0
	\end{pmatrix}.$$ Also, $n=ls=6$ and $k=(l-k_1)(s-k_2)=(2-0)(3-2)=2$. From, Equation \ref{eq3} generator matrix of $(\textswab{f},\rho_1,\rho_2,\Im_{\rho_1}, \Im_{\rho_2}
)$-skew generalized polycyclic code $\mathcal{L}$ is given by
	$$\mathbb{G}=\begin{pmatrix}
	g\\
	T_{1_\textswab{f}}(g)\\
	\end{pmatrix}=\begin{pmatrix}
	1&0&w^2&0&1&0\\
	0&1&w^2&1&0&1\\
	\end{pmatrix}.$$ Hence, $\mathcal{L}$ is a $(\textswab{f},\rho_1,\rho_2,\Im_{\rho_1}, \Im_{\rho_2}
)$-skew generalized polycyclic code having parameter $[6,2,3]$ which is near to optimal \cite{Grassl}.
\end{example}

\begin{example}
	Let $\mathbb{F}_9=\mathbb{F}_3(w)$, where $w^2+2w+2=0$ and  $B=\mathbb{F}_q[z_1;\rho_1,\Im_{\rho_1}][z_2;\rho_2,\Im_{\rho_2}]$ where $\rho_1,~\rho_2: \mathbb{F}_9 \rightarrow \mathbb{F}_9$  are Frobenius automorphisms defined by $\rho_1(a)=\rho_2(a)=a^3$, $\Im_{\rho_1}$ and $~\Im_{\rho_2}$ are zero derivations. Let $$\textswab{f}=w^3z_2^3+w^6z_2+w^3z_2^2+w^3+w^6z_1^3z_2^2+w^3z_1^3z_2+z_1^3=(w^5z_2+z_1^3+w^3)(w^2z_2^2+wz_2+1)$$ and $g=w^2z_2^2+wz_2+1$. We have $l=3$, $s=3$, $k_1=0$, $k_2=2$
	$$g=\begin{pmatrix}
	1&w&w^2\\
	0&0&0\\
	0&0&0
	\end{pmatrix},~ M_1=\begin{pmatrix}
	0&1&0\\
	0&0&1\\
	-w^3&0&0
	\end{pmatrix},~
	M_2=\begin{pmatrix}
	0&1&0\\
	0&0&1\\
	-w^3&-w^6&-w^3
	\end{pmatrix}.$$ Also, $n=ls=9$ and $k=(l-k_1)(s-k_2)=(3-0)(3-2)=3$. From, Equation \ref{eq3} generator matrix of $(\textswab{f},\rho_1,\rho_2,\Im_{\rho_1}, \Im_{\rho_2}
)$-skew generalized polycyclic code $\mathcal{L}$ is given by
	$$\mathbb{G}=\begin{pmatrix}
	g\\
	T_{1_\textswab{f}}(g)\\
	T_{1_\textswab{f}}^2(g)\\
	\end{pmatrix}=\begin{pmatrix}
	1&0&0&w&0&0&w^2&0&0\\
	0&1&0&0&w^3&0&0&w^6&0\\
	0&0&1&0&0&w&0&0&w^2\\
	\end{pmatrix}.$$ Hence, $\mathcal{L}$ is a $(\textswab{f},\rho_1,\rho_2,\Im_{\rho_1}, \Im_{\rho_2}
)$-skew generalized polycyclic code having parameter $[9,3,6]$ which is almost MDS.
\end{example}

\section{BCH lower bounds for the minimum distance of skew polycyclic codes with non-zero derivations}
In this section, we investigate BCH lower bounds for the minimum distance of skew polycyclic codes with non-zero derivations. To do so, first we derive the series of lemmas useful towards the main result of this section.
Following the norm used by Boucher and Ulmer \cite{Boucher2},
the norm of $i\in \mathbb{N},~N_{i}^{\rho, \Im_\rho}(b)$ is recursively defined as
$$N_{0}^{\rho, \Im_\rho}(b)=1$$ and
$$N_{i+1}^{\rho, \Im_\rho}(b)= \rho(N_{i}^{\rho, \Im_\rho}(b))b+ \Im_\rho (N_{i}^{\rho, \Im_\rho}(b)).$$
In particular, if $\Im_\rho=0$, we get the classical norm $N_{i}^{\rho, \Im_\rho}(b)= b\rho(b)\dots \rho^{i-1}(b).$

In this section, we derive the BCH lower bounds for the minimum distance of the skew polycyclic codes of the ring  $\mathcal{A}=\mathbb{F}_q [z; \rho,\Im_\rho]$ under certain conditions, where $\Im_\rho$ is a $\rho$-derivation.

Throughout this section, we consider $\textswab{f}(z)=z^n-\textswab{f}_{n -1}z^{n -1}-\cdots-\textswab{f}_1z-\textswab{f}_{0}$, where $\textswab{f}_{n -1},\dots,\textswab{f}_1,\textswab{f}_{0}\in \mathbb{F}_q$ and $f_{0}\neq 0$. Again, consider a map $\Omega_\textswab{f} : \mathbb{F}_q^n\rightarrow \mathcal{A}/\mathcal{A}\textswab{f}(z)$ by $\Omega_\textswab{f}(c)=[c(z)] \in \mathcal{A}/\mathcal{A}\textswab{f}(z)$ for all $ c\in \mathbb{F}_q^n$ where $$c(z)=\sum_{j=0}^{n -1} c_{j}z^{j},$$ $c_{j}\in \mathbb{F}_q$, i.e., $\Omega_\textswab{f}(c)=\Omega_\textswab{f}((c_0,c_1,\dots, c_{n -1}))=[c(z)] \in \mathcal{A}/\mathcal{A}\textswab{f}(z)$ $\forall$ $(c_0,c_1,\dots, c_{n -1})\in \mathbb{F}_q^n$. The map $\Omega_\textswab{f}$ is an $\mathbb{F}_q$-linear isomorphism between $\mathbb{F}_q$-modules. Now, we provide the following two lemmas that will be used later in the proof of the main result.
\begin{lemma}\label{6lem1}
	For any  non-zero $\rho$-derivation $\Im_\rho$, in $\mathcal{A}/\mathcal{A}\textswab{f}(z)$ we have $P(z)z=1$ and $zQ(z)=1+ \Im_\rho(Q(z))$,
	where $P(z)= \textswab{f}_{0}^{-1}z^{n -1}-\textswab{f}_{0}^{-1}\textswab{f}_{n -1}z^{n-2}-\cdots-\textswab{f}_{0}^{-1}\textswab{f}_2z-\textswab{f}_{0}^{-1}\textswab{f}_1$ and $Q(z)=\rho^{-1}(\textswab{f}_{0}^{-1})z^{n -1}-\rho^{-1}(\textswab{f}_{0}^{-1}\textswab{f}_{n -1})z^{n-2}-\cdots-\rho^{-1}(\textswab{f}_{0}^{-1}\textswab{f}_2)z-\rho^{-1}(\textswab{f}_{0}^{-1}\textswab{f}_1)$.
	
\end{lemma}
\begin{proof}
	Note that in $\mathcal{A}/\mathcal{A}\textswab{f}(z),$ we have the following equivalences:
	$$z^n-\textswab{f}_{n -1}z^{n -1}-\cdots-\textswab{f}_1z-\textswab{f}_{0}=0,$$ i.e., $$z^n-\textswab{f}_{n -1}z^{n -1}-\cdots-\textswab{f}_1z=\textswab{f}_{0}.$$ As $\textswab{f}_{0}\neq 0$,
	$$(\textswab{f}_0)^{-1}(z^{n -1}-\textswab{f}_{n -1}z^{n-2}-\cdots-\textswab{f}_1)z = 1,$$ i.e.,
	~$$(\textswab{f}_{0}^{-1}z^{n -1}-\textswab{f}_{0}^{-1}\textswab{f}_{n -1}z^{n-2}-\cdots-\textswab{f}_{0}^{-1}\textswab{f}_2z-\textswab{f}_{0}^{-1}\textswab{f}_1)z= 1.$$
	Hence, $P(z)z =1$.

	Moreover,
	$$z^n-\textswab{f}_{n -1}z^{n -1}-\cdots-\textswab{f}_1z =\textswab{f}_{0},$$ i.e., $$\textswab{f}_{0}^{-1}z^{n}-\textswab{f}_{0}^{-1}\textswab{f}_{n -1}z^{n -1}-\cdots-\textswab{f}_{0}^{-1}\textswab{f}_2z^2-\textswab{f}_{0}^{-1}\textswab{f}_1z= 1.$$ This implies
	$\textswab{f}_{0}^{-1}zz^{n -1}-\textswab{f}_{0}^{-1}\textswab{f}_{n -1}zz^{n-2}-\cdots-\textswab{f}_{0}^{-1}\textswab{f}_1z= 1$, i.e., $z\rho^{-1}(\textswab{f}_{0}^{-1})z^{n -1}-z\rho^{-1}(\textswab{f}_{0}^{-1}\textswab{f}_{n -1})z^{n-2}-\cdots-z\rho^{ -1}(\textswab{f}_{0}^{-1}\textswab{f}_1)= 1 + \Im_\rho(\rho^{-1}(\textswab{f}_{0}^{-1}))z^{n -1} - \Im_\rho(\rho^{-1}(\textswab{f}_{0}^{-1}\textswab{f}_{n -1}))\\z^{n-2} -\cdots - \Im_\rho(\rho^{-1}(\textswab{f}_{0}^{-1}\textswab{f}_1))$. This gives
	$z(\rho^{-1}(\textswab{f}_{0}^{-1})z^{n -1}-\rho^{-1}(\textswab{f}_{0}^{-1}\textswab{f}_{n -1})z^{n-2}-\cdots-\rho^{ -1}(\textswab{f}_{0}^{-1}\textswab{f}_1))= 1 + \Im_\rho(\rho^{-1}(\textswab{f}_{0}^{-1}))z^{n -1} - \Im_\rho(\rho^{-1}(\textswab{f}_{0}^{-1}\textswab{f}_{n -1}))z^{n-2} -\cdots - \Im_\rho(\rho^{-1}(\textswab{f}_{0}^{-1}\textswab{f}_1))$.\\ Hence, $zQ(z) = 1+\Im_\rho(Q(z))$
	where $Q(z)=\rho^{-1}(\textswab{f}_{0}^{-1})z^{n -1}-\rho^{-1}(\textswab{f}_{0}^{-1}\textswab{f}_{n -1})z^{n-2}-\cdots-\rho^{-1}(\textswab{f}_{0}^{-1}\textswab{f}_2)z-\rho^{-1}(\textswab{f}_{0}^{-1}\textswab{f}_1)$.

	Further, if $\rho = id.$, then $Q(z)=\textswab{f}_{0}^{-1}z^{n -1}-\textswab{f}_{0}^{-1}\textswab{f}_{n -1}z^{n-2}-\cdots-\textswab{f}_{0}^{-1}\textswab{f}_2x-\textswab{f}_{0}^{-1}\textswab{f}_1 =P(z)$.
\end{proof}
\begin{lemma}\label{6lem2}
	Suppose $\mathcal{L}\subseteq \mathbb{F}_q^n$ is a skew polycyclic code, and consider a polynomial $c(z)\in \Omega_\textswab{f}(\mathcal{L})$ with $w_H(c) = w$. Then there exists $r(z)\in \mathcal{A}$ such that \\
	$$t(z)=r(z)c(z) = 1+ \sum_{i = 1}^{w-1} c_{i}z^{a_{i}} + \sum_{i = 1}^{w-1} d_{i}z^{k_{i}} \in\Omega_\textswab{f}(\mathcal{L}),$$
	where $c_{i}, ~ d_i \in \mathbb{F}_q^{*}$ and $a_{i}, ~ k_i\in \mathbb{N}$ with $a_{i}, ~ k_i \leq n -1$ for $i = 1,\dots,w-1$. Also, $$w_H(t )=w.$$
\end{lemma}
\begin{proof}
	Since $w_H(c) = w$, we can write $c(z)= b_{k_0}z^{k_0} +  b_{k_1}z^{k_1} +\cdots+ b_{k_{w-1}}z^{k_{w-1}}$,
	where $b_{k_j}\in \mathbb{F}_q $, $k_{0}<\dots< k_{w-1}$ and $b_{k_j}\neq 0$, for $j= 0,\dots,w-1$. Hence,
	\begin{align*}
		c(z)=&  b_{k_0}z^{k_0}(1+ z^{-k_0} b_{k_0}^{-1}b_{k_1}z^{k_1} +\cdots+z^{-k_0} b_{k_0}^{-1}b_{k_{w-1}}z^{k_{w-1}})\\
		& = b_{k_0}z^{k_0}(1+\rho^{-k_{0}}(b_{k_0}^{-1}b_{k_1})z^{k_1-k_0}-\Im_\rho(\rho^{-k_0}(b_{k_0}^{-1}b_{k_1}))z^{k_1}+\cdots \\&+
		\rho^{-k_{0}}(b_{k_0}^{-1}b_{k_{w-1}})
		z^{k_{w-1}-k_0} -\Im_\rho(\rho^{-k_0}(b_{k_0}^{-1}b_{k_{w-1}}))z^{k_{w-1}})\\
		& = b_{k_0}z^{k_0}(1+\rho^{-k_{0}}(b_{k_0}^{-1}b_{k_1})z^{k_1-k_0}+\cdots+\rho^{-k_{0}}(b_{k_0}^{-1}b_{k_{w-1}})z^{k_{w-1}-k_0}\\&-\Im_\rho(\rho^{-k_0}(b_{k_0}^{-1}b_{k_1}))z^{k_1}-\cdots -\Im_\rho(\rho^{-k_0}(b_{k_0}^{-1}b_{k_{w-1}}))z^{k_{w-1}})\\
		z^{-k_0} b_{k_0}^{-1}c(z) =&1+\rho^{-k_{0}}(b_{k_0}^{-1}b_{k_1})z^{k_1-k_0}+\cdots+\rho^{-k_{0}}(b_{k_0}^{-1}b_{k_{w-1}})z^{k_{w-1}-k_0}\\&-\Im_\rho(\rho^{-k_0}(b_{k_0}^{-1}b_{k_1}))z^{k_1}-\cdots -\Im_\rho(\rho^{-k_0}(b_{k_0}^{-1}b_{k_{w-1}}))z^{k_{w-1}}.
	\end{align*}
	In Lemma \ref{6lem1},
	$P(z)z =1$. This implies $z^{-k_0}= P^{k_0}(z)$.

	Therefore, $ P^{k_0}(z) b_{k_0}^{-1}c(z) =1+\rho^{-k_{0}}(b_{k_0}^{-1}b_{k_1})z^{k_1-k_0}+\cdots+\rho^{-k_{0}}(b_{k_0}^{-1}b_{k_{w-1}}z^{k_{w-1}-k_0}) \\-\Im_\rho(\rho^{-k_0}(b_{k_0}^{-1}b_{k_1}))z^{k_1}-\cdots -\Im_\rho(\rho^{-k_0}(b_{k_0}^{-1}b_{k_{w-1}}))z^{k_{w-1}}\in \Omega_\textswab{f}(\mathcal{L})\subseteq \mathcal{A}/\mathcal{A}\textswab{f}(z)$.\\
	Now, put $r(z)=P^{k_0} b_{k_0}^{-1}$, $c_{j}=\rho^{-k_{0}}(b_{k_0}^{-1}b_{k_{j}})$, $d_{j}=- \Im_\rho(\rho^{-k_{0}}(b_{k_0}^{-1}b_{k_{j}}))=-\Im_\rho(c_j)$ and $a_{j}=k_j-k_0 $. This gives
	$$t(z)=r(z)c(z) = 1+ \sum_{i = 1}^{w-1} c_{i}z^{a_{i}} + \sum_{i = 1}^{w-1} d_{i}z^{k_{i}} \in\Omega_\textswab{f}(\mathcal{L}).$$ From above, we have $w_H(t )= w.$
\end{proof}
Inspired by the work done by Hartmann and Tzeng \cite{Hartmann}, Shi et al. \cite{Shi},  Cuiti$\tilde{n}$o and Tironi \cite{Luis}, the following result provide the lower bound on the minimum Hamming distance of skew polycyclic  codes in $\mathbb{F}_q[z;\rho,\Im_\rho]$.
\begin{theorem}\label{6t1}
	Let $\mathcal{L}=\langle g(z)\rangle_{n,q}^{\rho,\Im_\rho}$ be an skew polycyclic code. Suppose there exists $ \beta\in \bar{ \mathbb{F}_q}$( the algebraic closure of $\mathbb{F}_q$) and $l,m \in\mathbb{Z}_{+}\cup \{0\} $ such that $g(\beta_{i}^{l+mi})=0$, where $\beta_{i}= N_{l+mi}^{\rho,\Im_\rho}(\beta)$ for $i= 0,\dots,\bigtriangleup-2$. If $N_{i}(\beta^{m})\neq 1$, $N_{l+mi-1}(N_{i}(\beta^{m}))=1$ and $N_{i}^{\rho,\Im_\rho}(N_{l+mi}^{\rho,\Im_\rho}(\beta))=N_{l+mi}^{\rho,\Im_\rho}(N_{i}^{\rho,\Im_\rho}(\beta))$ for all $i= 1,\dots,n -1$, then $d_{\mathcal{L}}\geq \bigtriangleup$.
\end{theorem}
\begin{proof}
	If possible, let there exists  a polynomial $c(z)\in\Omega_\textswab{f}(\mathcal{L})$ with $w_H(c)=w< \bigtriangleup$. Then from Lemma \ref{6lem2}, $c(z)$ can be represented as
	\begin{align*}
		c(z)=& 1 +  \sum_{i = 1}^{w-1} c'_{i}z^{a_i} + \sum_{i = 1}^{w-1} d_{i}z^{k_i}
		= 1+  \sum_{i = 1}^{w-1} c'_{i}z^{a_{i}} -\sum_{i = 1}^{w-1} \Im_\rho( c'_{i})z^{k_{i}}\\
		=& 1+  \sum_{i = 1}^{w-1} c'_{i}z^{k_{i}-k_0} -\sum_{i = 1}^{w-1} \Im_\rho( c'_{i})z^{k_{i}}
		= 1+  \sum_{i = 1}^{w-1} (c'_{i}z^{-k_0} - \Im_\rho( c'_{i}))z^{k_i}\\
		=& 1+  \sum_{i = 1}^{w-1} (c'_{i}P(z)^{k_0} - \Im_\rho( c'_{i}))z^{k_i}
		= 1+  \sum_{i = 1}^{w-1} (c'_{i}b_{k_0} - \Im_\rho( c'_{i}))z^{k_i}.
	\end{align*}
	Hence, we have
	\begin{equation}\label{eq5}
		c = 1 +  \sum_{i = 1}^{w-1} c_{i}z^{k_i}
	\end{equation}
	where $c'_{i}, ~ d_i\in  \mathbb{F}_q^{*}$ and $a_{i}, k_i\in \mathbb{Z}_{+}$ with $a_{i}, ~k_i< n$ and $c_i= c'_{i}b_{k_0} - \Im_\rho( c'_{i})$, for  $i= 1,\dots,w-1$.

	Define $L_{i}= N_{a_i}^{\rho,\Im_\rho}(\beta_{i})$ and $P_{j}=\sum_{i = 1}^{w-1} c_{i}L_{i}^{j} = c(\beta_i^{j})-1$ and consider a polynomial $p(t)\in \mathbb{F}_q[t]$ given by
	\begin{align*}
		p(t)=& \prod_{i=1}^{w-1} (t-L_i^{m})\\
		=& t^{w-1}+p_1t^{w-2}+\cdots+p_{w-2}t+p_{w-1}\in \mathbb{F}_q[t].
	\end{align*}
	Then
	$L_{i}^m = N_{a_i}^{\rho,\Im_\rho}(\beta_{i}^m) =  N_{a_i}^{\rho,\Im_\rho}(N_{l+mi}^{\rho,\Im_\rho}(\beta^m)) = N_{l+mi}^{\rho,\Im_\rho}(N_{a_i}^{\rho,\Im_\rho}(\beta^m)).$

	If $N_{l+mi}^{\rho,\Im_\rho}(N_{a_i}^{\rho,\Im_\rho}(\beta^m)) = 1$, then
	\begin{align*}
		1 =& N_{l+mi}^{\rho,\Im_\rho}(N_{a_i}^{\rho,\Im_\rho}(\beta^m)) \\
		& = \rho( N_{l+mi-1}^{\rho,\Im_\rho}(N_{a_i}^{\rho,\Im_\rho}(\beta^m)))N_{a_i}^{\rho,\Im_\rho}(\beta^m)+\Im_\rho(N_{l+mi-1}^{\rho,\Im_\rho}(N_{a_i}^{\rho,\Im_\rho}(\beta^m)))\\
		=&\rho(1)N_{a_i}^{\rho,\Im_\rho}(\beta^m).
	\end{align*}
	This implies $N_{a_i}^{\rho,\Im_\rho}(\beta^m) = 1$, which contradicts the assumption.

	Therefore, $L_{i}^m = N_{l+mi}^{\rho,\Im_\rho}(N_{a_i}^{\rho,\Im_\rho}(\beta^m))\neq 1$. Hence, $1$ is not a root of polynomial $p(t)$, i.e., $p(1)\neq 1$.

	Next, we have
	\begin{align*}
		0 =& \sum_{i = 1}^{w-1} c_{i}L_{i}^{l}p(L_{i}^{m})= \sum_{i = 1}^{w-1} c_{i}L_{i}^{l}(L_{i}^{m(w-1)}+ p_{1}L_{i}^{m(w-2)}+\cdots+p_{w-2}L_{i}^{m}+p_{w-1})\\
		=&  \sum_{i = 1}^{w-1} c_{i}(L_{i}^{l+m(w-1)}+ p_{1}L_{i}^{l+m(w-2)}+\cdots+p_{w-2}L_{i}^{l+m}+p_{w-1}L_{i}^{l})\\
		=&  \sum_{i = 1}^{w-1} c_{i}L_{i}^{l+m(w-1)} +p_{1} \sum_{i = 1}^{w-1} c_{i}L_{i}^{l+m(w-2)}+\cdots+p_{w-2}\sum_{i = 1}^{w-1} c_{i}L_{i}^{l+m}\\ &+ p_{w-1}\sum_{i = 1}^{w-1} c_{i}L_{i}^{l}\\
		=&P_{l+m(w-1)}+p_1P_{l+m(w-2)}+\cdots+p_{w-2}P_{l+m}+ p_{w-1}P_l
	\end{align*}
	i.e.,
	\begin{equation}\label{6eqn}
		P_{l+m(w-1)}+p_1P_{l+m(w-2)}+\cdots+p_{w-2}P_{l+m}+ p_{w-1}P_l=0.
	\end{equation}
	Since $c(z)\in\Omega_\textswab{f}(\mathcal{L})= Rg(z)/\mathcal{A}\textswab{f}(z)$, so $c(\beta_{i}^{l+mi})= 0$, for $i= 0,1,\dots,\bigtriangleup-2$. Then $P_{l+mi}=c(\beta_{i}^{l+mi})-1= -1 $. From Equation \ref{6eqn}, we have $-1+p_{1}(-1)+\cdots+p_{w-2}(-1)+p_{w-1}(-1)=0$. This implies $p(1)=0$, which contradicts the assumption. Therefore, $d_{\mathcal{L}}\geq \bigtriangleup$.
\end{proof}
The following theorem is an extension of the Theorem $\ref{6t1}$ and gives a better BCH lower bound for the minimum distance of skew polycyclic codes over $\mathbb{F}_{q}$.
\begin{theorem}\label{6th2}
	Let $\mathcal{L}=\langle g(z) \rangle_{n,q}^{\rho,\Im_\rho}$ be an skew polycyclic code. Suppose there exists  $ \beta\in \bar{\mathbb{F}_q}$ and $l, m_{1},m_{2}\in\mathbb{Z}_{+}\cup \{0\} $ such that $(m_{1},m_{2})\neq (0,0)$, $g(\beta_{i_1}^{l+m_{1}i_{1}+m_{2}i_{2}})=0$, where $\beta_{i_1}= N_{l+m_1i_1+m_2i_2}^{\rho,\Im_\rho}(\beta)$ for $i_1= 0,1,\dots,\bigtriangleup-2$ and $i_2= 0,1,\dots,s$. If $N_{i}(\beta^{m_j})\neq 1$, $N_{l+m_1i_1+m_2i_2-1}(N_{i}(\beta^{m_j}))=1$ and $N_{i}^{\rho,\Im_\rho}(N_{l+m_1i_1+m_2i_2}^{\rho,\Im_\rho}
	(\beta))=N_{l+m_1i_1+m_2i_2}^{\rho,\Im_\rho}(N_{i}^{\rho,\Im_\rho}(\beta))$ $\forall$ $i= 1,2,\dots,n -1$, $j= 1,2$, then $d_{\mathcal{L}}\geq \bigtriangleup +s$.
\end{theorem}
\begin{proof}
	From Theorem \ref{6t1}, we have $d_{\mathcal{L}}\geq \bigtriangleup$. If possible, let there exists a polynomial $c(z)\in\Omega_\textswab{f}(\mathcal{L})$ with $w_H(c)=w$ such that $\bigtriangleup \leq w < \bigtriangleup +s$. Then from equation (\ref{eq5}), $$c(z) = 1+ \sum_{i = 1}^{w-1} c_{i}z^{a_{i}}$$
	where $c_{i}\in \mathbb{F}_q^{*}$ and $a_{i}\in \mathbb{Z}_{+}$ with $a_{i}< n$, for  $i= 1,\dots,w-1$.

	Now, define $L_{i}= N_{a_i}^{\rho,\Im_\rho}(\beta_{i})$ and $P_{j}=\sum_{i = 1}^{w-1} c_{i}L_{i}^{j} = c(\beta_{i_1}^{j})-1$ and consider the polynomials $p(t), q(t)\in \mathbb{F}_q[t]$ defined by
	\begin{align*}
		p(t)=& \prod_{i_1=1}^{\bigtriangleup-2} (t-L_{i_1}^{m_1})\\
		=& t^{\bigtriangleup-2}+p_1t^{\bigtriangleup-3}+\cdots+p_{\bigtriangleup-3}t+p_{\bigtriangleup-2}\in \mathbb{F}_q[t]\\ \text{and}~~
		q(t)=& \prod_{i_2=1}^{w-1} (t-L_{i_2}^{m_2})\\
		=&t^{w-\bigtriangleup+1}+q_1t^{w-\bigtriangleup}+\cdots+q_{w-\bigtriangleup}t+q_{w-\bigtriangleup+1}\in \mathbb{F}_q[t].
	\end{align*}
	Let $r(t)=p(t)q(t)\in \mathbb{F}_q[t]$ and $L_{i_j}^{m_j} = N_{a_{i_j}}^{\rho,\Im_\rho}(\beta_{i_1}^{m_j}) = N_{a_{i_j}}^{\rho,\Im_\rho}(N_{l+m_1i_1+m_2i_2}^{\rho,\Im_\rho}(\beta^{m_j})) =\\ N_{l+m_1i_1+m_2i_2}^{\rho,\Im_\rho}(N_{a_{i_j}}^{\rho,\Im_\rho}(\beta^{m_j}))$. If $N_{l+m_1i_1+m_2i_2}^{\rho,\Im_\rho}(N_{a_{i_j}}^{\rho,\Im_\rho}(\beta^{m_j})) = 1,$ then
	\begin{align*}
		1 =& N_{l+m_1i_1+m_2i_2}^{\rho,\Im_\rho}(N_{a_{i_j}}^{\rho,\Im_\rho}(\beta^{m_j})) \\
		= &\rho( N_{l+m_1i_1+m_2i_2-1}^{\rho,\Im_\rho}(N_{a_{i_j}}^{\rho,\Im_\rho}(\beta^{m_j})))N_{a_{i_j}}^{\rho,\Im_\rho}(\beta^{m_j})+ \Im_\rho(N_{l+m_1i_1+m_2i_2-1}^{\rho,\Im_\rho}(N_{a_{i_j}}^{\rho,\Im_\rho}(\beta^{m_j})))\\
		=&\rho(1)N_{a_{i_j}}^{\rho,\Im_\rho}(\beta^{m_j}).
	\end{align*}
	This implies $N_{a_{i_j}}^{\rho,\Im_\rho}(\beta^{m_j}) = 1$, a contradiction. Therefore, $L_{i_j}^{m_j}\neq 1$, for $j=1,2$. Since $L_{i_1}^{m_1}$ and $L_{i_2}^{m_2}$ are roots of $p(t)$ and $q(t)$, respectively. Hence, $1$ is not a root of $r(t)=p(t)q(t)$. Also,
	\begin{align*}
		0 =& \sum_{i = 1}^{w-1} c_{i}L_{i}^{l}p(L_{i}^{m_1})q(L_{i}^{m_2})\\
		=& \sum_{i = 1}^{w-1} c_{i}L_{i}^{l}(L_{i}^{m_1(\bigtriangleup-2)}+ p_{1}L_{i}^{m_1(\bigtriangleup-3)}+\cdots+p_{\bigtriangleup-2})(L_{i}^{m_2(w-\bigtriangleup-1)}\\
		&+ q_{1}L_{i}^{m_2(w-\bigtriangleup)}+\cdots+q_{w-\bigtriangleup+1})\\
		=&  \sum_{i = 1}^{w-1} c_{i}[(L_{i}^{l+m_1(\bigtriangleup-2)}+ p_{1}L_{i}^{l+m_1(\bigtriangleup-3)}+\cdots+p_{\bigtriangleup-2}L_{i}^{l})(L_{i}^{m_2(w-\bigtriangleup-1)}\\
		&+ q_{1}L_{i}^{m_2(w-\bigtriangleup)}+\cdots+q_{w-\bigtriangleup+1})]\\
		=&  \sum_{i = 1}^{w-1} c_{i}[(L_{i}^{l+m_1(\bigtriangleup-2)+m_2(w-\bigtriangleup-1)}+ p_{1}L_{i}^{l+m_1(\bigtriangleup-3)+m_2(w-\bigtriangleup-1)}+\cdots\\
		&+p_{\bigtriangleup-2}L_{i}^{l+m_2(w-\bigtriangleup+1)})+\cdots+q_{w-\bigtriangleup+1}(L_{i}^{l+m_1(\bigtriangleup-2)}+ p_{1}L_{i}^{l+m_1(\bigtriangleup-3)}\\
		&+\cdots+p_{\bigtriangleup-2}L_{i}^{l})]\\
		=&P_{l+m_1(\bigtriangleup-2)+m_2(w-\bigtriangleup+1)}+p_1P_{l+m_1(\bigtriangleup-3)+m_2(w-\bigtriangleup+1)}+\cdots+ \\ & p_{\bigtriangleup -2}P_{l+m_2(w-\bigtriangleup+1)}+\cdots+q_1(P_{l+m_1(\bigtriangleup-2)+m_2(w-\bigtriangleup)}+\\& p_1P_{l+m_1(\bigtriangleup-3)+m_2(w-\bigtriangleup)}+\cdots+ p_{\bigtriangleup -2}P_{l+m_2(w-\bigtriangleup)})
		+\cdots+q_{w-\bigtriangleup+1}\\&     (P_{l+m_1(\bigtriangleup-2)}+p_1P_{l+m_1(\bigtriangleup-3)}+\cdots+  p_{\bigtriangleup -2}P_{l}).
	\end{align*}
	i.e.,
	\begin{eqnarray}
		&&\nonumber P_{l+m_1(\bigtriangleup-2)+m_2(w-\bigtriangleup+1)}+p_1P_{l+m_1(\bigtriangleup-3)+m_2(w-\bigtriangleup+1)}+\cdots\\
		&&\nonumber \hspace{.7cm}
		+ p_{\bigtriangleup -2}
		P_{l+m_2(w-\bigtriangleup+1)}+
		\cdots+q_1(P_{l+m_1(\bigtriangleup-2)+m_2(w-\bigtriangleup)}+\\
		&&\nonumber  \hspace{1cm}p_1P_{l+m_1(\bigtriangleup-3)+m_2(w-\bigtriangleup)}+\cdots+ p_{\bigtriangleup -2}P_{l+m_2(w-\bigtriangleup)})+ \cdots\\
		&& \hspace{1.3cm}+q_{w-\bigtriangleup+1}(P_{l+m_1(\bigtriangleup-2)}
		+p_1P_{l+m_1(\bigtriangleup-3)}+\cdots+ p_{\bigtriangleup -2}P_{l}) =0.
	\end{eqnarray}
	Since $c(z)\in\Omega_\textswab{f}(\mathcal{L})$ and $c(\beta_{i_1}^{l+m_1i_1+m_2i_2})=0$, $P_{l+m_1i_1+m_2i_2}=c(\beta_{i_1}^{l+m_1i_1+m_2i_2})-1= -1 $, for $i_1= 0,\dots,\bigtriangleup -2$, $i_2= 0,\dots,s$ and  $\bigtriangleup \leq w < \bigtriangleup +s$. Hence, from the above equations, we have
	$(-1-p_1+\cdots+p_{\bigtriangleup-3}-p_{\bigtriangleup-2})(1+q_1+\cdots+q_{w-\bigtriangleup}-q_{w-\bigtriangleup+1})=0$. This implies $r(1)=0$, which contradicts the assumption. Thus, $w\geq \bigtriangleup +s$, i.e., $d_{\mathcal{L}}\geq \bigtriangleup +s$.
\end{proof}
In view of Theorem \ref{6t1},  we provide an extension of Theorem \ref{6th2}, which can be proved inductively.
\begin{theorem}\label{6th3}
	Let $\mathcal{L}=\langle g(z) \rangle_{n,q}^{\rho,\Im_\rho}$ be an skew polycyclic code. Suppose that there exist $ \beta\in \bar{ \mathbb{F}_q}$ and $l,m_{1},m_{2},\dots m_r\in\mathbb{Z}_{+}\cup\{0\} $ such that $(m_{1},m_{2}\dots m_r)\neq (0,0)$, $g(\beta_{i_1}^{l+m_{1}i_{1}+\cdots+m_{r}i_{r}})=0$, where $\beta_{i_1}= N_{l+m_1i_1+\cdots+m_{r}i_{r}}^{\rho,\Im_\rho}(\beta)$ for $i_1= 0,\dots,\bigtriangleup-2$ and  $i_k= 0,\dots,s_k$ and $k=2,3,\dots, r$. If $N_{i}(\beta^{m_j})\neq 1$, $N_{l+m_1i_1+m_2i_2+\cdots+m_{r}i_{r}-1}(N_{i}(\beta^{m_j}))=1$ and $N_{i}^{\rho,\Im_\rho}(N_{l+m_1i_1+\cdots+m_{r}i_{r}}^{\rho,\Im_\rho}(\beta))=N_{l+m_1i_1+\cdots+m_{r}i_{r}}^{\rho,\Im_\rho}(N_{i}^{\rho,\Im_\rho}(\beta))$ for all $i= 1,\dots,n -1$, $j= 1,2,\dots, r$, then $d_{\mathcal{L}}\geq \bigtriangleup + \sum_{k=2}^{r}s_k$.
\end{theorem}
\begin{corollary}
	Let $\mathcal{L}=\langle g(z) \rangle_{n,q}^{\rho,\Im_\rho}$ with $q \geq n+1$. If $\beta \in \bar{\mathbb{F}_q}$ and $l,c_{1},c_{2},\dots, c_r\in\mathbb{Z}_{+}\cup\{0\} $ such that  $(c_{1},c_{2},\dots, c_r)\neq (0,0,\dots,0)$,
	$g(z)= lclm \{z-\beta^{l+\sum_{k=1}^{r}i_kc_k} : i_1=0,\dots,\bigtriangleup,~ i_k=0,\dots, s_k,~ \bigtriangleup + \sum_{k=2}^{r}s_k=n -k+1\},$
	$N_{i}(\beta^{c_j})\neq 1$, $N_{l+c_1i_1+c_2i_2+\cdots+c_{r}i_{r}}(N_{i}(\beta^{c_j}))=1$ and $N_{i}^{\rho,\Im_\rho}(N_{l+c_1i_1+\cdots+c_{r}i_{r}}^{\rho,\Im_\rho}(\beta)) =N_{l+c_1i_1+\cdots+c_{r}i_{r}}^{\rho,\Im_\rho}(N_{i}^{\rho,\Im_\rho}(\beta))$ for all $i= 1,\dots,n -1$, $j= 1,2,\dots, r$, then $\mathcal{L}$ is an MDS code.
\end{corollary}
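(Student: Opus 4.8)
The plan is to obtain the minimum distance exactly by trapping it between the BCH lower bound supplied by Theorem \ref{th3} and the classical Singleton upper bound, and then to read off the MDS property from the resulting equality. Write $\delta = \bigtriangleup + \sum_{k=2}^{r} s_k$ for the designed distance attached to the prescribed root set. Since every hypothesis of Theorem \ref{th3}---namely the vanishing $g(\beta_{i_1}^{\,l+\sum_k i_k c_k})=0$ together with the three norm conditions $N_i(\beta^{c_j})\neq 1$, $N_{l+\sum_k i_k c_k}(N_i(\beta^{c_j}))=1$ and the commutation $N_i^{\theta,\delta}(N_{l+\sum_k i_k c_k}^{\theta,\delta}(\beta))=N_{l+\sum_k i_k c_k}^{\theta,\delta}(N_i^{\theta,\delta}(\beta))$---is imposed verbatim in the statement, I would first simply invoke Theorem \ref{th3} to conclude $d_{\mathcal{C}}\geq \delta$.

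The second and decisive step is to pin down $\dim_{\mathbb{F}_q}\mathcal{C}$. Because $\mathcal{C}=(g)_{n,q}^{\theta,\delta}$ is carried by $\Omega_f$ onto $Rg/Rf$, it is a free left $\mathbb{F}_q$-module of rank $n-\deg g$, so $\dim\mathcal{C}=n-\deg g$. Here I would argue that the least common left multiple defining $g$ has degree exactly $\delta-1$: the generating set consists of monic linear skew polynomials $x-\beta^{\,l+\sum_k i_k c_k}$, and under the hypothesis $q\geq n+1$ the exponents $l+\sum_k i_k c_k$ yield pairwise distinct field elements, so that the prescribed factors are pairwise non-associate and sit in the right position for their lclm to have degree equal to the number of distinct factors. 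Identifying that number with $\delta-1$ gives $\deg g=\delta-1$ and hence $\dim\mathcal{C}=n-\delta+1$.

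Granting the degree count, the conclusion is immediate: the Singleton bound yields $d_{\mathcal{C}}\leq n-\dim\mathcal{C}+1=\deg g+1=\delta$, which together with $d_{\mathcal{C}}\geq\delta$ forces $d_{\mathcal{C}}=\delta=n-\dim\mathcal{C}+1$, i.e. $\mathcal{C}$ is MDS by definition. I expect the genuine obstacle to be the middle step, namely verifying $\deg g=\delta-1$. The subtlety is that over a skew polynomial ring the degree of an lclm of linear factors need not equal the naive cardinality of the indexing set---conjugate or coincident roots can collapse it---so the precise role of the condition $q\geq n+1$ and of the norm relations is to guarantee that the chosen roots $\beta^{\,l+\sum_k i_k c_k}$ are distinct and mutually in general position, which is exactly what makes the lclm degree, and therefore the dimension, come out as required. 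Everything else is bookkeeping.
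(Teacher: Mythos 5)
Your proof skeleton is exactly the paper's: invoke Theorem \ref{th3} to get $d_{\mathcal{C}}\geq \bigtriangleup+\sum_{k=2}^{r}s_k$, use $\dim\mathcal{C}=n-\deg g$ together with the Singleton bound for the reverse inequality, and read off MDS from the resulting equality. The paper's own proof is nothing more than this squeeze; all of the degree bookkeeping is compressed into the single unproved assertion $\deg(g)\leq n-k$, which, given the hypothesis $\bigtriangleup+\sum_{k=2}^{r}s_k=n-k+1$, is precisely the inequality $\deg g+1\leq \bigtriangleup+\sum_{k=2}^{r}s_k$ that the squeeze needs.

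The problem is that your justification of that middle step runs in the wrong direction, and as argued it would fail. Write $d^{*}=\bigtriangleup+\sum_{k=2}^{r}s_k$ (your ``$\delta$'', an unfortunate clash with the derivation). What is needed is only the \emph{upper} bound $\deg g\leq d^{*}-1$; equality is then automatic from $d^{*}\leq d_{\mathcal{C}}\leq \deg g+1$. Now $g$ is the lclm of $(\bigtriangleup+1)\prod_{k=2}^{r}(s_k+1)$ linear factors, and this count strictly exceeds $d^{*}-1$. So the upper bound can hold only if the exponents $l+\sum_{k}i_kc_k$ \emph{collapse} to at most $d^{*}-1$ distinct values --- the Hartmann--Tzeng-type situation, e.g.\ all $c_j$ equal, where the exponent set is an arithmetic progression of length $d^{*}-1$ --- that is, one needs massive coincidences among the prescribed roots. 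Your argument asserts the opposite: that $q\geq n+1$ and the norm conditions force the roots $\beta^{l+\sum_k i_kc_k}$ to be pairwise distinct and ``in general position'' so that the lclm has degree equal to the number of factors, and you then identify that number with $d^{*}-1$, which is a miscount. If the roots really were distinct and P-independent, then $\deg g$ would equal $(\bigtriangleup+1)\prod_{k=2}^{r}(s_k+1)>d^{*}-1$, Singleton would only yield $d_{\mathcal{C}}\leq\deg g+1>d^{*}$, and the squeeze --- hence the MDS conclusion --- would break down. So the step you yourself flagged as the genuine obstacle is exactly the step your proposal gets backwards; the paper, for its part, does not prove it either, but simply postulates $\deg(g)\leq n-k$.
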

\begin{proof}
	Since $deg (g(z))\leq n -k$, and the singleton bound is given by $d_{\mathcal{L}}\leq  n-dim(\mathcal{L}) +1= n-(n-deg(g(z)))+1 \leq n -k+1$. From Theorem \ref{6th3}, it follows that $d_{\mathcal{L}}\geq n -k+1$. This implies $d_{\mathcal{L}}=n -k+1.$
\end{proof}
\section{Comparative Summary}
In Table \ref{tabMDS}, we have provided the MDS codes in $\mathbb{F}^n_q$ for $q=8,~11$. It can be easily seen that for some code parameters, cyclic and constacyclic codes do not exist, but one can construct polycyclic codes with the same code parameters. In Table \ref{tab1}, we have also shown that for some code parameters, polycyclic codes do not exist, but the corresponding skew polycyclic codes exist. Here, we have performed all the calculations using the Magma software \cite{Bosma}.
\begin{longtable}[h]{|l|l|l|l|l|l|}
%\begin{table}[ht]
%	\begin{tabular}{|l|l|l|l|l|l|}
	\hline	
	$q$  & $n=$  & $k$ & $d$ & $g$ such that   & $a$ such that \\
	& $\deg~ \textswab{f}$&  &   &  $g$ divides $\textswab{f}$  &  $\textswab{f}= z^n-a$\\
	\hline	
	$8$ & $2$ & $1$ & $2$ & $z+w^6, z+w^2, z+w^5, z+w,z+w^4,z+1$,& $1$,$w$,$ w^2$,$w^6$\\
	&     &   &   & $z+w^3$ & $w^3,w^4,w^5$\\
	\hline
	$8$& $3$& $2$ & $2$ &   $z+w^6, z+w^2, z+w^5, z+w,z+w^4,z+1,$& $1$,$w$, $w^2$,$w^6$\\
	&     &   &   & $z+w^3$ & $w^3,w^4,w^5$\\
	\hline
	& & $1$ & $3$ &  $z^2+z+w,z^2+w^3z+1, z^2+wz+w^3 $& $\nexists$\\
	& & &  & $ z^2+w^4z+w^2, z^2+w^2z+w^5, z^2+w^2z+w,$&\\
	&     &   &   & $z^2+w^5z+w^4,z^2+w^5z+w,z^2+z+w^4$&\\
	&     &   &   & $z^2+z+w^2,z^2+w^3z+w^3,z^2+w^6z+w^6$&\\
	&     &   &   & $z^2+w^3z+w,z^2+w^6z+w^2,z^2+wz+w^6$&\\
     \hline
	&     &   &   & $z^2+w^6+1,z^2+wz+w^4,z^2+w^4z+w^5$&\\	
	&     &   &   & $z^2+w^4z+w^3,z^2+w^2z+w^6,z^2+w^5z+w^5$&\\	
	\hline
	$8$ & $4$ & $3$ & $2$ & $z+w^6, z+w^2, z+w^5, z+w,z+w^4,z+1,$& $1,w, w^2,w^3$\\
	&     &   &   & $z+w^3$ & $w^4,w^5,w^6$\\
	\hline	
	&$4$& & & for codes of &	\\
	&     &   & & the type $[4,1,4]$ and   $[4,2,3]$ $\nexists ~ \textswab{f}$& \\
	\hline
	$8$ & $5$ & $4$ & $2$ & $z+w^6, z+w^2, z+w^5, z+w,z+w^4,z+1,$& $1,w, w^2,$\\
	&     &   &   & $z+w^3$ & $w^3,w^4,w^5,$\\
	&     &   &   &  & $w^6$\\
	\hline
	&$5$ & & & for codes of &	\\
	&  &   &  & the type $[5,3,3]$, $[5,1,5]$ and   $[5,2,4]$ $\nexists ~ \textswab{f}$& \\
	\hline
	$8$ & $6$ & $5$ & $2$ & $z+w^6, z+w^2, z+w^5, z+w,z+w^4,z+1,$& $1,w, w^2,w^6$\\
	&     &   &   & $z+w^3$ & $w^3,w^4,w^5,$\\
	\hline
	$8$&$6$ & & & for codes of &	\\	
	&  &   &  & the type $[6,1,6]$, $[6,4,3]$, $[6,2,5]$ and   $[6,3,4]$ $\nexists ~ \textswab{f}$& \\	
	\hline
	&$7$ & & & for codes of &	\\	
	&  &   &  & the type $[7,2,6]$, $[7,1,7]$, $[7,5,3]$,&\\
	& & & & $[7,6,2]$, $[7,3,5]$ and   $[7,4,4]$ always $\exists ~ \textswab{f}$ & $1$\\
	\hline
	$11$ & $2$& $1$& $2$ & $z+10$& $1$\\	
	\hline
	$11$& $3$& $2$& $2$ & $z+10$& $1$\\	
	\hline
	
	& & $1$& $3$ & $z^2+9z+1$& $\nexists$\\	
	\hline
	$11$ & $4$& $3$ & $2$ & $z+10$& $1$\\
	\hline
	& & $2$& $3$ & $z^2+9z+1$& $\nexists$\\
	\hline
	$11$ & $5$& $4$ & $2$ & $z+10$& $1$\\
	\hline
	
	& & $2$& $4$ & $z^3+8z^2+3z+10$& $\nexists$\\
	\hline	
	& & $3$& $3$ & $z^2+9z+1$& $\nexists$\\
	\hline	
	
	$11$ & $6$& $5$ & $2$ & $z+10$& $1$\\
	\hline
	& & $2$& $5$ & $z^4+7z^3+6z^2+7z+1$& $\nexists$\\
	\hline
	& & $3$& $4$ & $z^3+8z^2+3z+10$& $\nexists$\\
	\hline
	& & $4$& $3$ & $z^2+9z+1$& $\nexists$\\
	\hline
	$11$ & $7$& $6$ & $2$ & $z+10$& $1$\\
	\hline
	& & $2$& $6$ & $z^5+6z^4+10z^3+z^2+5z+10$& $\nexists$\\
	
	\hline
	& & $3$& $5$ & $z^4+7z^3+6z^2+7z+1$& $\nexists$\\
	\hline
	& & $4$& $4$ & $z^3+8z^2+3z+10$& $\nexists$\\
	\hline	
	& & $5$& $3$ & $z^2+9z+1$& $\nexists$\\
	\hline
	$11$ & $8$& $7$ & $2$ & $z+10$& $1$\\
	\hline
	
	& & $3$& $6$ & $z^5+6z^4+10z^3+z^2+5z+10$& $\nexists$\\
	\hline
	& & $4$& $5$ & $z^4+7z^3+6z^2+7z+1$& $\nexists$\\
	\hline	
	& & $5$& $4$ & $z^3+8z^2+3z+10$& $\nexists$\\
	\hline		
	$11$& $8$ & $2$& $7$ & $z^6+5z^5+4z^4+2z^3+4z^2+5z+1$& $\nexists$\\
	\hline		
	& & $6$& $3$ & $z^2+9z+1$& $\nexists$\\
	\hline		
	$11$ & $9$& $8$ & $2$ & $z+10$& $1$\\
	\hline
	& & $4$& $6$ & $z^5+6z^4+10z^3+z^2+5z+10$& $\nexists$\\
	\hline
	& & $5$& $5$ & $z^4+7z^3+6z^2+7z+1$& $\nexists$\\
	\hline		
	& & $6$& $4$ & $z^3+8z^2+3z+10$& $\nexists$\\
	\hline		
	& & $3$& $7$ & $z^6+5z^5+4z^4+2z^3+4z^2+5z+1$& $\nexists$\\
	\hline		
	& & $7$& $3$ & $z^2+9z+1$& $\nexists$\\
	\hline		
	& & $2$& $8$ & $z^7+4z^6+10z^5+9z^4+2z^3+z^2+7z+10$& $\nexists$\\
	\hline
	$11$ & $10$& $9$ & $2$ & $z+10$& $1$\\
	\hline
	& & $5$& $6$ & $z^5+6z^4+10z^3+z^2+5z+10$& $\nexists$\\
	\hline
	& & $6$& $5$ & $z^4+7z^3+6z^2+7z+1$& $\nexists$\\
	\hline		
	& & $7$& $4$ & $z^3+8z^2+3z+10$& $\nexists$\\
	\hline		
	& & $4$& $7$ & $z^6+5z^5+4z^4+2z^3+4z^2+5z+1$& $\nexists$\\
	\hline		
	& & $8$& $3$ & $z^2+9z+1$& $\nexists$\\
	\hline		
	& & $3$& $8$ & $z^7+4z^6+10z^5+9z^4+2z^3+z^2+7z+10$& $\nexists$\\
	\hline	
	& & $2$& $9$ & $z^8+3z^7+6z^6+10z^5+4z^4+10z^3+6z^2+3z+1$& $\nexists$\\
	\hline
	\caption{ MDS codes in $\mathbb{F}^n_q$ for $q=8,~11$}\label{tabMDS}	
\end{longtable}
\subsection{Comparison of  polycyclic and skew polycyclic codes}
It is seen that skew polycyclic codes have better parameters over the polycyclic codes. To make it clear, we consider
$$\textswab{f}(z)= z^5+z^4+z^3+wz^2+1 \in \mathbb{F}_{9}[z;\rho].$$  Then we obtain $16$ MDS skew polycyclic codes with parameters $[5,2,4]_9$ and $[5,3,3]_9$. The generator polynomials for the codes having parameters $[5,2,4]_9$ are

$wz^3 + 2z^2 + w^2z + 1,~w^2z^3 + w^5z^2 + w^3z + w,~w^3z^3 + w^6z^2 + 2z + w^2,~2z^3 + w^7z^2 + w^5z + w^3
,~w^5z^3 + z^2 + w^6z + 2,~w^6z^3 + wz^2 + w^7z + w^5,~w^7z^3 + w^2z^2 + z + w^6,~z^3 + w^3z^2 + wz + w^7;$

and codes with parameters $[5,3,3]_9$ are
$w^7z^2 + w^2z + 1,~z^2 + w^3z + w,wz^2 + 2z + w^2,~w^2z^2 + w^5z + w^3,~w^3z^2 + w^6z + 2,~2z^2 + w^7z + w^5,~w^5z^2 + z + w^6,w^6z^2 + wz + w^7$.

On the other hand, we get only one MDS polycyclic code with parameters $[5,3,3]_{9}$ and its generator polynomial $z^2+w^6z+w^7$. There does not exist any polycyclic code with parameters $[5,2,4]_9$.
\begin{example}
	Consider $\textswab{f}(z)= z^5+wz^4+z^3+wz^2+1 \in \mathbb{F}_8[z;\rho]$. Here, we obtain $21$ MDS skew polycyclic codes with parameters $[5,4,2]_8$, $[5,1,5]_8$, and $[5,3,3]_8$. If $\rho=id$, then we get only $2$ MDS polycyclic codes with parameters $[5,4,2]_8$ and $[5,3,3]_8$ having corresponding generator polynomials $z+w^6$ and $z^2+w^3x+w^4$, respectively.
\end{example}

\begin{example}
	Consider $\textswab{f}(z)= z^5+z^4+z^3+wz^2+1 \in \mathbb{F}_{9}[z;\rho]$.
	Here, we obtain $16$ MDS skew polycyclic codes with parameters $[5,2,4]_9$ and $[5,3,3]_9$. If $\rho=id$, then we get only one MDS polycyclic code with parameters $[5,3,3]_{9}$ and generator polynomial $z^2+w^6z+w^7$.
\end{example}
\begin{example}
	Consider $\textswab{f}(z)=wz^5+z^4+z^3+z^2+1 \in \mathbb{F}_{8}[z;\tau]$. Here, we obtain $21$ MDSskew polycyclic codes with parameters $[5,2,4]_8$, $[5,4,2]_8$  and $[5,1,5]_8$. If $\tau=id$, then we get only $2$ MDS polycyclic codes with parameters $[5,4,2]_{8}$ and $[5,1,5]_{8}$ having corresponding generator polynomials $z + 4$ and $z^4 + 2z^3 + 3z^2 + 4z + 4,$ respectively.
\end{example}
\section{Conclusion}
In this paper, we have considered skew generalized polycyclic codes over the iterated skew polynomial ring $ \mathscr{R}[x_1;\tau_1,\delta_{\tau_1}][x_2;\tau_2,\delta_{\tau_2}]$, that are invariant under pseudo-linear transformation of $\mathscr{R}^n$ where $\mathscr{R}$ is a ring with unity or a finite field $\mathbb{F}_q$. Further, for non-zero derivations, we have studied the BCH lower bounds for the minimum distance of these codes. As an application of our established results, we have also included some examples of MDS codes. Finally, we have provided a comparative summary of our proposed work with polycyclic codes, constacyclic codes and cyclic codes, which is shown in Table \ref{tab1} and Table \ref{tabMDS}.

\begin{landscape}
\begin{table}
\renewcommand{\arraystretch}{.8}
			%\centering
			\begin{center}
				
				\begin{tabular}{|c|c|c|c|c|c|c|c|c|c|c|c|c|c|c|}
					%{p{1cm} p{.5cm} p{4cm} p{4cm} p{1.5cm} p{1.5cm}}
					
					%\multicolumn{6}{c}{Table 1: Linear codes as Gray images of $\lambda$-constacyclic codes} \\\\
					\hline	
					$q$  & $n$  & $\textswab{f}(z)$ & $d$ & Parameters   & No. of   &Parameters& No. of  \\
					&&&&of polycyclic& MDS&of skew polycyclic codes & MDS\\
					&&&&codes&polycyclic &&skew polycyclic codes \\
					&&&&& & &\\
					\hline
					$8$& $8$ &  $ wz^8+w^2z^7+wz^6+wz^4 $ & $3$ &  & $\nexists$& $[8,6,3]_8$&$7$ \\
					&&$+z^2+1$&&&&&\\
					$8$& $9$ & $ z^9 + z^7+wz^6+wz^5 + w^2z^4  $ & $2,3$ &$[9,8,2]_8$  & $1$& $[9,8,2]_8$&$14$ \\
					&&$+wz^3+ w^2z ^2 + w$&&& & $[9,7,3]_8$&\\
					$8$& $9$ & $ z^9 + wz^5 + z^4 + z^2 + 1 $ & $2,3$ &$[9,8,2]_8$  & $1$& $[9,8,2]_8$&$21$ \\
					&&&&& & $[9,7,3]_8$&\\
					
					$9$& $8$ & $ wz^8+w^2z^7+wz^6+wz^4 $ & $2,3$ &  & $\nexists$&$[8,7,2]_9$ &$23$ \\
					&&$+z^2+1$&&& &$[8,6,3]_9$& \\
					$9$& $5$ & $ wz^5+z^4+z^3+z^2+1 $ & $2,3,4,5$ &$[5,4,2]_9$  & $1$&$[5,4,2]_9$ &$24$ \\
					&&&&& &$[5,3,3]_9$& \\
					&&&&& &$[5,2,4]_9$& \\	
					&&&&& &$[5,1,5]_9$& \\	
					$9$& $6$ & $ wz^6+wz^4+z^3+wz^2+1 $ & $2,3,4$ &$[6,5,2]_9$  & $2$&$[6,5,2]_9$ &$40$ \\
					&&&&& &$[6,4,3]_9$& \\
					&&&&& &$[6,3,4]_9$& \\
					\hline
				\end{tabular}\caption{Comparison between the number of MDS skew polycyclic codes and polycyclic codes}
				\label{tab1}
			\end{center}
		\end{table}
	\end{landscape}

\section*{Acknowledgements}
The authors are thankful to the Department of Science and Technology (DST), Govt. of India for financial support under Ref No.- DST/INSPIRE/03/2016/ 001445 and Indian Institute of Technology Patna for providing the research facilities.
\section*{Data Availability}
The authors declare that [the/all other] data supporting the findings of this study are available within the article. Any clarification may be requested from the corresponding author, provided it is essential. \\
\textbf{Competing interests}: The authors declare that there is no conflict of interest regarding the publication of this manuscript.

	\end{document}